\theoremstyle{plain}
\newtheorem{lem}{\protect\lemmaname}
\theoremstyle{remark}
\newtheorem{rem}{\protect\remarkname}
\theoremstyle{plain}
\newtheorem{thm}{\protect\theoremname}
\DeclareMathOperator*{\plim}{plim}
\providecommand{\lemmaname}{Lemma}
\providecommand{\remarkname}{Remark}
\providecommand{\theoremname}{Theorem}
\title{
    Shapley value confidence intervals for\\
    attributing variance explained
}
\author{
    Daniel Vidali Fryer\footnote{\texttt{
    daniel.fryer@uq.edu.au} (Corresponding Author)},\\
    School of Mathematics and Physics,\\
    The University of Queensland, St Lucia, Australia \\
    Inga Str{\" u}mke\footnote{\texttt{inga@simula.no}},\\
    Simula Research Laboratory,\\
    Pilestredet 52, Oslo, Norway\\
    Hien Nguyen\footnote{\texttt{h.nguyen5@latrobe.edu.au}}, \\
    Department of Mathematics and Statistics, \\
    La Trobe University, Melbourne, Australia\\
    }
\date{}
\begin{document}

\maketitle

\begin{abstract}
    The coefficient of determination, the $R^2$, is often used to measure the variance explained by an affine combination of multiple explanatory covariates. 
    An attribution of this explanatory contribution to each of the individual covariates is often sought in order to draw inference regarding the importance of each covariate with respect to the response phenomenon.
    A recent method for ascertaining such an attribution is via the game theoretic Shapley value decomposition of the coefficient of determination. 
    Such a decomposition has the desirable efficiency, monotonicity, and equal treatment properties.  
    Under a weak assumption that the joint distribution is pseudo-elliptical, we obtain the asymptotic normality of the Shapley values. 
    We then utilize this result in order to construct confidence intervals and hypothesis tests regarding such quantities. 
    Monte Carlo studies regarding our results are provided. 
    We found that our asymptotic confidence intervals are computationally superior to competing bootstrap methods and are able to improve upon the performance of such intervals. 
    In an expository application to Australian real estate price modelling, we employ Shapley value confidence intervals to identify significant differences between the explanatory contributions of covariates, between models, which otherwise share approximately the same $R^2$ value. 
    These different models are based on real estate data from the same periods in 2019 and 2020, the latter covering the early stages of the arrival of the novel coronavirus, COVID-19.
\end{abstract}


\section{Introduction}
The multiple linear regression model (MLR) is among the most commonly
applied tools for statistics inference; see \cite{Seber:2003aa} and
\cite[Part I]{Greene:2007aa} for thorough introductions to MLR models.
In the usual MLR setting, one observes an independent and identically distributed
(IID) sample of data pairs $\bm{Z}_{i}^{\top}=\left(Y_{i},\bm{X}_{i}^{\top}\right)\in\mathbb{R}^{d+1}$,
where $i\in\left[n\right]=\left\{ 1,\dots,n\right\} $, and $d,n\in\mathbb{N}$.
The MLR model is then defined via the linear relationship
\[
\mathbb{E}\left(Y_{i}|\bm{X}_{i}=\bm{x}_{i}\right)=\beta_{0}+\sum_{i=1}^{d}\beta_{j}X_{ij}\text{,}
\]
where $\bm{\beta}^\top=\left(\beta_{0},\dots,\beta_{n}\right)\in\mathbb{R}^{d+1}$,
and $\bm{X}_{i}^{\top}=\left(X_{i1},\dots,X_{id}\right)$. We shall
also write $\bm{Z}_{i}^{\top}=\left(Z_{i0},Z_{i1},\dots,Z_{id}\right)$,
when it is convenient to do so. Here, $\left(\cdot\right)^\top$ is the transposition operator.

The usual nomenclature is to call the $Y_{i}$ and $\bm{X}_{i}$ elements
of each pair, the response (or dependent) variable and the explanatory
(or covariate) vector, respectively. Here, the $j$th element of $\bm{X}_{i}$: 
$X_{ij}$ ($j\in\left[d\right]$), is referred to as the $j\text{th}$ explanatory
variable (or the $j\text{th}$ covariate). We may put the pairs of
data into a dataset $\mathcal{Z}_{n}=\left\{ \bm{Z}_{i}\right\} _{i=1}^{n}$.

Let $R_{jk}\left(\mathcal{Z}_{n}\right)$ denote the sample correlation
coefficient
\begin{equation}
R_{jk}\left(\mathcal{Z}_{n}\right)=\frac{\sum_{i=1}^{n}\left(Z_{ij}-\bar{Z}_{j}\right)\left(Z_{ik}-\bar{Z}_{k}\right)}{\sqrt{\sum_{i=1}^{n}\left(Z_{ij}-\bar{Z}_{j}\right)^{2}}\sqrt{\sum_{i=1}^{n}\left(Z_{ik}-\bar{Z}_{k}\right)^{2}}}\text{,}\label{eq: sample corr}
\end{equation}
for each $j,k\in\left\{ 0\right\} \cup\left[d\right]$. Here $\bar{Z}_{j}=n^{-1}\sum_{i=1}^{n}Z_{ij}$
is the sample mean of variable $j$. Write $\mathcal{U}\subseteq\left\{ 0\right\} \cup\left[d\right]$
be a nonempty subset, where $\mathcal{U}=\left\{ u_{1},\dots,u_{\left|\mathcal{U}\right|}\right\} $,
where $\left|\mathcal{U}\right|$ is the cardinality of $\mathcal{U}$.
We refer to the matrix of correlations between the variables in $\mathcal{U}$
as
\begin{equation}
\mathbf{C}_{n}\left(\mathcal{U}\right)=
\left[\begin{array}{cccc}
1 & R_{u_{1}u_{2}}\!\left(\mathcal{Z}_{n}\right) & \cdots & R_{u_{1}u_{\left|U\right|}}\!\left(\mathcal{Z}_{n}\right)\\
R_{u_{2}u_{1}}\!\left(\mathcal{Z}_{n}\right) & \phantom{\Big|} 1 & \cdots & R_{u_{2}u_{\left|\mathcal{U}\right|}}\!\left(\mathcal{Z}_{n}\right)\\
\vdots & \vdots & \ddots & \vdots\\
R_{u_{\left|\mathcal{U}\right|}u_{1}}\!\left(\mathcal{Z}_{n}\right) & R_{u_{\left|\mathcal{U}\right|}u_{2}}\!\left(\mathcal{Z}_{n}\right) & \cdots & 1
\end{array}\right]
\text{.}\label{eq: correlation S}
\end{equation}
A common inferential task is to determine the degree to which the
response can be explained by the covariate vector, in totality. The
usual device for addressing this question is via the coefficient of
determination (or squared coefficient of multiple correlation), which
is defined as

\begin{equation}
R^{2}\left(\mathcal{Z}_{n}\right)=1-\frac{\left|\mathbf{C}_{n}\left(\left\{ 0\right\} \cup\left[d\right]\right)\right|}{\left|\mathbf{C}_{n}\left(\left[d\right]\right)\right|}\text{,}\label{eq: R2 definition}
\end{equation}
where $\left|\mathbf{C}\right|$ is the matrix determinant of $\mathbf{C}$
(cf. \cite{Cowden:1952aa}). Intuitively, the coefficient of determination
measures the proportion of the total variation in the response variable
that is explained by variation in the covariate vector. See \cite[Sec. 4.4]{Seber:2003aa}
and \cite[Sec. 3.5]{Greene:2007aa} for details regarding the derivation
and interpretation of the $R^{2}\left(\mathcal{Z}_{n}\right)$ coefficients.

A refinement to the question that is addressed by the $R^{2}\left(\mathcal{Z}_{n}\right)$
coefficient, is that of eliciting the contribution of each of the covariates to the total value of $R^{2}\left(\mathcal{Z}_{n}\right)$.

In the past, this question has partially been resolved via the use
of partial correlation coefficients (see, e.g. \cite[Sec. 3.4]{Greene:2007aa}).
Unfortunately, such coefficients are only able to measure
the contribution of each covariate coefficient of determination, conditional
to the presence of other covariates that are already in the MLR model.

A satisfactory resolution to the question above, is provided by \cite{Lipovetsky:2001aa},
\cite{Israeli:2007aa}, and \cite{Huettner:2012aa}, who each suggested
and argued for the use of the Shapley decomposition of \cite{Shapley:1953aa}.
The Shapley decomposition is a game-theoretic method for decomposing
the contribution to the value of a utility function in the context
of cooperative games.

Let $\bm{\pi}^{\top}=\left(\pi_{1},\dots,\pi_{d}\right)$ be a permutation
of the set $\left[d\right]$. For each $j\in\left[d\right]$, let
\[
\mathcal{S}_{j}\left(\bm{\pi}\right)=\left\{ k \,:\, \pi_k < \pi_j, \, k\in\left[d\right]\right\} 
\]
be the elements of $[d]$ that appear before $\pi_j$ when $[d]$ is permuted by $\bm{\pi$}. We may
define $R_{\mathcal{S}_{j}\left(\bm{\pi}\right)}^{2}\left(\mathcal{Z}_{n}\right)$
and $R_{\left\{ j\right\} \cup\mathcal{S}_{j}\left(\bm{\pi}\right)}^{2}\left(\mathcal{Z}_{n}\right)$
in a similar manner to \eqref{eq: R2 definition}, using the generic
definition
\begin{equation}
R_{\mathcal{S}}^{2}\left(\mathcal{Z}_{n}\right)=1-\frac{\left|\mathbf{C}_{n}\left(\left\{ 0\right\} \cup\mathcal{S}\right)\right|}{\left|\mathbf{C}_{n}\left(\mathcal{S}\right)\right|}\text{,}\label{eq: R2 S}
\end{equation}
for nonempty subsets $\mathcal{S}\subseteq\left[d\right]$, and  $R_{\mathcal{\{\,\}}}^{2}\left(\mathcal{Z}_{n}\right)=0$ for the empty set.

Treating the coefficient of determination as a utility function, we may conduct a Shapley partition of the $R^{2}\left(\mathcal{Z}_{n}\right)$ coefficient by computing the $j\text{th}$ Shapley value, for each of the $d$ covariates, defined by
\begin{equation}
V_{j}\left(\mathcal{Z}_{n}\right)=\left|\mathcal{P}\right|^{-1}\sum_{\bm{\pi}\in\mathcal{P}}\left[R_{\left\{ j\right\} \cup\mathcal{S}_{j}\left(\bm{\pi}\right)}^{2}\left(\mathcal{Z}_{n}\right)-R_{\mathcal{S}_{j}\left(\bm{\pi}\right)}^{2}\left(\mathcal{Z}_{n}\right)\right]\text{,}\label{eq: shapley value}
\end{equation}
where $\mathcal{P}$ is the set of all possible permutations of $\left[d\right]$.

Compared to other decompositions of the coefficient of determination, such as those considered in \cite{Gromping:2006aa} and \cite{Gromping:2007aa}, the Shapley values, obtained from the partitioning above, have the favorable axiomatic properties that were well exposed in \cite{Huettner:2012aa}.
Specifically, the Shapley values have the efficiency, monotonicity, and, equal treatment properties, and the decomposition is provably the only method that satisfies all three of these properties (cf. \cite[Thm. 2]{Young:1985aa}).
Here, in the context of the coefficient of determination, efficiency, monotonicity, and equal treatment are defined as follows:

\begin{description}

\item[Efficiency:] The sum of the Shapley values across all
covariates equates to the coefficient of determination, that is
\[
\sum_{j=1}^{d}V_{j}\left(\mathcal{Z}_{n}\right)=R^{2}\left(\mathcal{Z}_{n}\right).
\]

\item[Monotonicity:] For pairs of samples $\mathcal{Z}_{m}$ and $\mathcal{Z}_{n}$, of sizes $m,n\in\mathbb{N}$, 
\[
R_{\left\{ j\right\} \cup\mathcal{S}_{j}\left(\bm{\pi}\right)}^{2}\left(\mathcal{Z}_{n}\right)-R_{\mathcal{S}_{j}\left(\bm{\pi}\right)}^{2}\left(\mathcal{Z}_{n}\right)\ge R_{\left\{ j\right\} \cup\mathcal{S}_{j}\left(\bm{\pi}\right)}^{2}\left(\mathcal{Z}_{m}\right)-R_{\mathcal{S}_{j}\left(\bm{\pi}\right)}^{2}\left(\mathcal{Z}_{m}\right)\text{,}
\]
for every $\bm{\pi}\in\mathcal{P}$, implies that $V_{j}\left(\mathcal{Z}_{n}\right)\ge V_{j}\left(\mathcal{Z}_{m}\right)$, for each $j\in[d]$.

\item[Equal treatment:] If covariates
$j,k\in\left[d\right]$ are substitutes in the sense that
\[
R_{\left\{ j\right\} \cup\mathcal{S}_{j}\left(\bm{\pi}\right)}^{2}\left(\mathcal{Z}_{n}\right)=R_{\left\{ k\right\} \cup\mathcal{S}_{k}\left(\bm{\pi}\right)}^{2}\left(\mathcal{Z}_{n}\right)\text{,}
\]
for each $\bm{\pi}\in\mathcal{P}$ such that $k\notin\mathcal{S}_{j}\left(\bm{\pi}\right)$
and $j\notin\mathcal{S}_{k}\left(\bm{\pi}\right)$, then the Shapley decomposition is such that $V_{j}\left(\mathcal{Z}_{n}\right)=V_{k}\left(\mathcal{Z}_{n}\right)$.

\end{description}

We note that equal treatment is also often referred to as symmetry in the literature.
The uniqueness of the Shapley decomposition in exhibiting the three
described properties is often used as the justification for its application. Furthermore, there are numerous sets of axiomatic properties that lead to the Shapley value decomposition as a solution (see, e.g., \cite{feltkamp1995alternative}). In the statistics literature, it is known that the axioms for decomposition of the coefficient of determination that are proposed by \cite{kruskal1987relative} correspond exactly to the Shapley values (cf. \cite{genizi1993decomposition}).

When conducting statistical estimation and computation, the assumption
of randomness of data necessitates that we address not only the problem
of point estimation, but also variability quantification. In \cite{Huettner:2012aa},
variability for the coefficient of determination Shapley values were
quantified via the use of bootstrap confidence intervals (CIs). Combined
with the usual computational intensiveness of bootstrap resampling
(see, e.g., \cite{Efron:1988aa} and \cite[Ch. 12]{Baglivo:2005aa}),
the combinatory nature of the computation of \eqref{eq: shapley value} 
(notice that $\left|\mathcal{P}\right|=d!$) compounds the time complexity of such a method, which is already of order $\mathcal{O}(2^d)$.
In this article, we seek to provide an asymptotic method for computing CIs for the Shapley values.

Our approach uses the joint asymptotic normality result of the elements
in a correlation matrix, under an elliptical assumption, via \cite{Steiger:1984aa},
combined with asymptotic normality results concerning the determinants
of a correlation matrix, of \cite{Hedges:1983ab} and \cite{Olkin:1995aa}.
Using these results, we derive the asymptotic joint distribution for
the $R^{2}\left(\mathcal{Z}_{n}\right)$ Shapley values, which allows
us to construct CIs for each of the values and their contrasts.
We assess the finite sample properties of our constructions via a comprehensive
Monte Carlo study and demonstrate the use of our CIs via applications
to real estate price data.

The remainder of the article proceeds as follows. 
In Section \ref{sec:main}, we present our main results regarding the asymptotic distribution of the coefficient of determination Shapley values, and their CI constructions.
In Section \ref{s: monte}, we present a comprehensive Monte Carlo study of our CI construction method. 
In Section \ref{sec:realestate}, we demonstrate how our results can be applied to real estate price data. 
Conclusions are lastly drawn in Section \ref{sec:discussion}.
\section{Main results} \label{sec:main}
\subsection{The correlation matrix}
Let \textbf{$\bm{Z}\in\mathbb{R}^{d+1}$ }be a random variable with
mean vector $\bm{\mu}\in\mathbb{R}^{d+1}$ and covariance matrix $\bm{\Sigma}\in\mathbb{R}^{\left(d+1\right)\times\left(d+1\right)}$.
Then, we can define the coefficient of multivariate kurtosis \cite{Mardia:1970aa}
by

\[
\kappa=\frac{1}{\left(d+1\right)\left(d+3\right)}\mathbb{E}\left[\left(\bm{Z}-\bm{\mu}\right)^{\top}\bm{\Sigma}^{-1}\left(\bm{Z}-\bm{\mu}\right)\right]^{2}\text{.}
\]
Let $\rho_{jk}=\text{cor}\left(Z_{j},Z_{k}\right)$, for $j,k\in\left\{ 0\right\} \cup\left[d\right]$
such that $j\ne k$. Assume that $\bm{Z}$ arises from an elliptical
distribution (cf. \cite[Ch. 2]{Fang1990}) and let $\mathcal{Z}_{n}$
be an IID sample with the same distribution as $\bm{Z}$. Then, we
may estimate $\rho_{jk}$ using the sample correlation coefficient
\eqref{eq: sample corr}. Upon writing $\text{acov}$ to denote the
asymptotic covariance, we have the following result due to Corollary
1 of \cite{Steiger:1984aa}.
\begin{lem}
\label{lem steiger}If $\bm{Z}$ arises from an elliptical distribution
and has coefficient of multivariate kurtosis $\kappa$, then the normalized
coefficients of correlation $\zeta_{jk}=n^{1/2}\left(R_{jk}-\rho_{jk}\right)$
($j,k\in\left\{ 0\right\} \cup\left[d\right]$; $j\ne k$) converge
to a jointly normal distribution with asymptotic mean and covariance
elements $0$ and
\begin{eqnarray}
\text{\normalfont acov}\!\left(\zeta_{gh},\zeta_{jk}\right) & = & \kappa\left[\rho_{gh}\rho_{jk}\left(\rho_{gj}^{2}+\rho_{hj}^{2}+\rho_{gk}^{2}+\rho_{hk}^{2}\right)/2+\rho_{gj}\rho_{hk}+\rho_{gk}\rho_{hj}\right]\nonumber \\
 &  & -\kappa\left[\rho_{gh}\left(\rho_{hj}\rho_{hk}+\rho_{gj}\rho_{gk}\right)+\rho_{jk}\left(\rho_{gj}\rho_{hj}+\rho_{gk}\rho_{hk}\right)\right]\text{.}\label{eq: cov of cor}
\end{eqnarray}
\end{lem}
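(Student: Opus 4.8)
The plan is to derive Lemma~\ref{lem steiger} as the elliptical specialization of the classical asymptotic theory for sample correlation coefficients; indeed it is precisely Corollary~1 of \cite{Steiger:1984aa}, so the only genuinely new bookkeeping is to verify that the kurtosis parameter appearing there is Mardia's coefficient $\kappa$. In outline: (i) obtain joint asymptotic normality of the sample covariances from the multivariate central limit theorem; (ii) use the fact that an elliptical law has fourth central moments determined entirely by its second moments and one scalar; (iii) transfer the covariance-level limit to the correlation coefficients by the delta method; and (iv) collapse the resulting quadratic form into \eqref{eq: cov of cor}.

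For step (i), I would first reduce to $\bm{\mu}=\bm{0}$, observing that replacing the population mean by $\bar{\bm{Z}}$ changes each sample covariance by a term of order $O_p(n^{-1})$, negligible after multiplication by $n^{1/2}$. Writing $S_{jk}=n^{-1}\sum_{i=1}^{n}(Z_{ij}-\bar Z_{j})(Z_{ik}-\bar Z_{k})$, the vector of distinct $S_{jk}$ is an average of IID quantities, so the CLT gives $n^{1/2}\bigl(\mathrm{vech}(\mathbf{S}_{n})-\mathrm{vech}(\bm{\Sigma})\bigr)\xrightarrow{d}\mathcal{N}(\bm{0},\bm{\Omega})$ with $\Omega_{(gh),(jk)}=\mathbb{E}\bigl[(Z_{g}-\mu_{g})(Z_{h}-\mu_{h})(Z_{j}-\mu_{j})(Z_{k}-\mu_{k})\bigr]-\sigma_{gh}\sigma_{jk}$. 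For step (ii), an elliptical $\bm{Z}$ satisfies $\mathbb{E}\bigl[(Z_{g}-\mu_{g})(Z_{h}-\mu_{h})(Z_{j}-\mu_{j})(Z_{k}-\mu_{k})\bigr]=\kappa\,(\sigma_{gh}\sigma_{jk}+\sigma_{gj}\sigma_{hk}+\sigma_{gk}\sigma_{hj})$ for a scalar $\kappa$; contracting this identity twice against $\bm{\Sigma}^{-1}$ yields $\mathbb{E}\bigl[\bigl((\bm{Z}-\bm{\mu})^{\top}\bm{\Sigma}^{-1}(\bm{Z}-\bm{\mu})\bigr)^{2}\bigr]=\kappa(d+1)(d+3)$, so this scalar is exactly the $\kappa$ in the statement (and $\kappa=1$ recovers the Gaussian/Isserlis case). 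For step (iii), each $R_{jk}=S_{jk}(S_{jj}S_{kk})^{-1/2}$ is a smooth function of $\mathbf{S}_{n}$ in the interior of the parameter space, so the delta method gives joint normality of $\zeta_{jk}=n^{1/2}(R_{jk}-\rho_{jk})$ with mean $0$ and covariances obtained by sandwiching $\bm{\Omega}$ between the gradients of the correlation maps.

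The real work, and the only place the argument can break, is step (iv). The gradient of $(\sigma_{jk},\sigma_{jj},\sigma_{kk})\mapsto\sigma_{jk}(\sigma_{jj}\sigma_{kk})^{-1/2}$ has three nonzero entries, so the delta-method sum for $\text{acov}(\zeta_{gh},\zeta_{jk})$ splits into nine products; substituting the elliptical form of $\bm{\Omega}$ and collecting by inspection must reproduce the two-line expression \eqref{eq: cov of cor}. A helpful simplification is that the correlation map is invariant under coordinate rescaling, so one may evaluate the gradients at the standardized point $\sigma_{jj}=\sigma_{kk}=1$ (where they become $1$, $-\rho_{jk}/2$, $-\rho_{jk}/2$), which strips away the normalizing factors and makes the cancellations transparent; the Gaussian case $\kappa=1$ provides a convenient check, as \eqref{eq: cov of cor} must then reduce to the classical formula for the asymptotic covariance of sample correlations under normality. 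The most economical alternative is simply to quote Corollary~1 of \cite{Steiger:1984aa} once the identification of the kurtosis parameter with $\kappa$ has been made explicit.
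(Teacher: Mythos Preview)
Your proposal is correct and in fact goes well beyond what the paper does: the paper offers no proof of Lemma~\ref{lem steiger} at all, but simply attributes it to Corollary~1 of \cite{Steiger:1984aa}. Your final remark---that the most economical route is to quote that corollary once the kurtosis parameter has been identified with Mardia's $\kappa$---is precisely the paper's entire treatment, while your sketch of steps (i)--(iv) supplies the underlying argument that the paper omits.
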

\begin{rem}
We note that the elliptical distribution assumption above can be replaced by a broader pseudo-elliptical assumption, as per \cite{Yuan:1999aa} and \cite{Yuan:2000aa}. 
This is a wide class of distributions that includes some that may not be symmetric. Due to the complicated construction of the class, we refer the interested reader to the source material for its definition.
\end{rem}

\begin{rem}
We may state a similar result that replaces the elliptical assumption by a fourth moments existence assumption instead, using Proposition 2 of \cite{Steiger:1984aa}. In order to make practical use of such an assumption,
we require the estimation of $\left(d+1\right)!$/$\left[\left(d-3\right)!4!\right]$ fourth order moments instead of a single kurtosis term $\kappa$. 
Such a result may be useful when the number of fourth order moments is small, but become infeasible rapidly, as $d$ increases.
\end{rem}
Let $\mathcal{V}\subseteq\left\{ 0\right\} \cup\left[d\right]$, where
$\mathcal{V}=\left\{ v_{1},\dots,v_{\left|\mathcal{V}\right|}\right\} $.
Define $\mathbf{C}_{n}\left(\mathcal{V}\right)$ in the same manner
as \eqref{eq: correlation S}, and let 
\[
\mathbf{R}\left(\mathcal{U}\right)=\left[\begin{array}{cccc}
1 & \rho_{u_{1}u_{2}} & \cdots & \rho_{u_{1}u_{\left|\mathcal{U}\right|}}\\
\rho_{u_{2}u_{1}} & 1 & \cdots & \rho_{u_{2}u_{\left|\mathcal{U}\right|}}\\
\vdots & \vdots & \ddots & \vdots\\
\rho_{u_{\left|\mathcal{U}\right|}u_{1}} & \rho_{u_{\left|\mathcal{U}\right|}u_{2}} & \cdots & 1
\end{array}\right]
\]
and

\[
\mathbf{R}\left(\mathcal{V}\right)=\left[\begin{array}{cccc}
1 & \rho_{v_{1}v_{2}} & \cdots & \rho_{v_{1}v_{\left|\mathcal{V}\right|}}\\
\rho_{v_{2}v_{1}} & 1 & \cdots & \rho_{v_{2}v_{\left|\mathcal{V}\right|}}\\
\vdots & \vdots & \ddots & \vdots\\
\rho_{v_{\left|\mathcal{V}\right|}v_{1}} & \rho_{v_{\left|\mathcal{V}\right|}v_{2}} & \cdots & 1
\end{array}\right]\text{.}
\]
The following theorem is adapted from a result of \cite{Hedges:1983ab}
(also appearing as Theorem 1 in \cite{Hedges:1983aa}). Our result
expands upon the original theorem, to allow for inference regarding
elliptically distributed data, and not just normally distributed data.
We further fix some typographical matters that appear in both \cite{Hedges:1983aa}
and \cite{Hedges:1983ab}.
\begin{lem}
Assume the same conditions as in \eqref{lem steiger}. 
Then, the normalized covariance determinant
$\delta\!\left(\mathcal{U}\right)=n^{1/2}\left(\left|\mathbf{C}_{n}\left(\mathcal{U}\right)\right|-\left|\mathbf{R}\left(\mathcal{U}\right)\right|\right)$
(where $\mathcal{U}$ and $\mathcal{V}$ are nonempty subsets of $\left\{ 0\right\} \mathcal{\cup}\left[d\right]$)
converges to a jointly normal distribution, with asymptotic mean and covariance elements $0$ and
\begin{equation}
\text{\normalfont acov}\!\left(\delta\!\left(\mathcal{U}\right),\delta\!\left(\mathcal{V}\right)\right)=\sum_{g,h\in\mathcal{U}}\sum_{j,k\in\mathcal{V}}r_{\mathcal{U}}\left(g,h\right)r_{\mathcal{V}}\left(j,k\right)\text{\normalfont acov}
\!\left(\zeta_{gh},\zeta_{jk}\right)\text{,}\label{eq: cov det}
\end{equation}
where $\text{\normalfont acov}\!\left(\zeta_{gh},\zeta_{jk}\right)$ is as per \eqref{eq: cov of cor},
\[
\begin{bmatrix}
r_{\mathcal{U}}\!\left(\,u_{1},u_{1}\right) & r_{\mathcal{U}}\!\left(\,u_{1},u_{2}\right) & \cdots & r_{\mathcal{U}}\!\left(\,u_{1},u_{\left|\mathcal{U}\right|}\right)\\
r_{\mathcal{U}}\!\left(\,u_{2},u_{1}\right) & r_{\mathcal{U}}\!\left(\,u_{2},u_{2}\right) & \cdots & r_{\mathcal{U}}\!\left(\,u_{2},u_{\left|\mathcal{U}\right|}\right)\\
\vdots & \vdots & \ddots & \vdots\\
r_{\mathcal{U}}\!\left(u_{\left|\mathcal{U}\right|},u_{1}\right) & r_{\mathcal{U}}\!\left(u_{\left|\mathcal{U}\right|},u_{2}\right) & \cdots & r_{\mathcal{U}}\!\left(u_{\left|\mathcal{U}\right|},u_{\left|\mathcal{U}\right|}\right)
\end{bmatrix}
=\left|\mathbf{R}\left(\mathcal{U}\right)\right|\mathbf{R}^{-1}\left(\mathcal{U}\right)\text{,}
\]
and $r_{\mathcal{V}}\left(j,k\right)$ ($j,k\in\mathcal{V}$) is defined
similarly.
\end{lem}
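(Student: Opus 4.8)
The approach is a multivariate delta-method argument with Lemma~\ref{lem steiger} as the sole probabilistic input. First I would record two structural observations. (i) For any nonempty $\mathcal{U}\subseteq\{0\}\cup[d]$, the determinant $|\mathbf{C}_n(\mathcal{U})|$ is a polynomial, hence a $C^\infty$ function, of the off-diagonal sample correlations $\{R_{gh}:g,h\in\mathcal{U},\ g<h\}$, the diagonal entries being identically $1$. (ii) Assuming $\bm{Z}$ is non-degenerate (so that $\bm{\Sigma}$, and therefore every principal submatrix of the population correlation matrix, is positive definite), each $\mathbf{R}(\mathcal{U})$ is invertible, so the weights $r_{\mathcal{U}}(g,h)=|\mathbf{R}(\mathcal{U})|\,(\mathbf{R}^{-1}(\mathcal{U}))_{gh}$ appearing in the statement are well defined. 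Then I would stack all the normalized correlations $\zeta_{gh}=n^{1/2}(R_{gh}-\rho_{gh})$ that enter any determinant of interest into a single vector $\bm{\zeta}$, which by Lemma~\ref{lem steiger} converges in distribution to a centred multivariate normal with covariance entries given by \eqref{eq: cov of cor}.

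The second step is to apply the delta method to the map $\bm{\rho}\mapsto(|\mathbf{R}(\mathcal{U})|,|\mathbf{R}(\mathcal{V})|,\dots)$ sending the vector of true correlations to any finite collection of such determinants, evaluated at the true parameter value; differentiability there is immediate from (i). To identify the Jacobian I would invoke Jacobi's formula, $\partial|\mathbf{A}|/\partial A_{pq}=|\mathbf{A}|\,(\mathbf{A}^{-1})_{qp}$. Because $\mathbf{C}_n(\mathcal{U})$ is symmetric, the free parameter $\rho_{gh}$ (for $g\ne h$) occupies both the $(g,h)$ and $(h,g)$ positions, so the chain rule gives $\partial|\mathbf{R}(\mathcal{U})|/\partial\rho_{gh}=2|\mathbf{R}(\mathcal{U})|\,(\mathbf{R}^{-1}(\mathcal{U}))_{gh}=2\,r_{\mathcal{U}}(g,h)$. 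A first-order Taylor expansion then yields the linear representation $\delta(\mathcal{U})=\sum_{g,h\in\mathcal{U}}r_{\mathcal{U}}(g,h)\,\zeta_{gh}+o_p(1)$, where the double sum runs over ordered pairs and we use the convention $\zeta_{gg}:=0$ (the diagonal entries being nonrandom); the symmetries $r_{\mathcal{U}}(g,h)=r_{\mathcal{U}}(h,g)$ and $\zeta_{gh}=\zeta_{hg}$ absorb the factor of $2$, reconciling ordered with unordered summation. Joint asymptotic normality of the $\delta(\mathcal{U})$ with mean zero follows, and computing the limiting covariance from these representations together with \eqref{eq: cov of cor} gives
\[
\text{acov}\!\left(\delta\!\left(\mathcal{U}\right),\delta\!\left(\mathcal{V}\right)\right)=\sum_{g,h\in\mathcal{U}}\sum_{j,k\in\mathcal{V}}r_{\mathcal{U}}\!\left(g,h\right)r_{\mathcal{V}}\!\left(j,k\right)\text{acov}\!\left(\zeta_{gh},\zeta_{jk}\right)\text{,}
\]
which is \eqref{eq: cov det}.

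I expect there to be no genuine analytic obstacle; the delicate part is purely combinatorial bookkeeping, namely tracking the symmetry constraint on the correlation matrix so that the cofactor derivative, the factor of $2$, and the ordered-versus-unordered index convention in \eqref{eq: cov det} all agree — which is precisely where the typographical slips in \cite{Hedges:1983aa} and \cite{Hedges:1983ab} occur, and what the present statement corrects. Finally I would note that the argument uses Lemma~\ref{lem steiger} only through its conclusion, so it goes through verbatim under the weaker pseudo-elliptical hypothesis of the preceding remark (invoking \cite{Yuan:1999aa} and \cite{Yuan:2000aa} in place of \cite{Steiger:1984aa}), and likewise under a bare fourth-moment assumption with $\kappa$ replaced by the relevant fourth-order moments.
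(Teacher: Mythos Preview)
Your proposal is correct and follows essentially the same approach as the paper: apply the delta method to the determinant map, using Jacobi's formula $\partial|\mathbf{R}|/\partial\mathbf{R}=|\mathbf{R}|\mathbf{R}^{-\top}$ for the Jacobian, with Lemma~\ref{lem steiger} supplying the input asymptotic normality. The paper's proof is terser but makes the same bookkeeping point you highlight, noting explicitly that the unconstrained derivative is used because each off-diagonal pair is summed over twice in \eqref{eq: cov det}.
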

\begin{proof}
The result is due to an application of the delta method (see, e.g.,
\cite[Thm. 3.1]{vanderVaart1998}) and the fact that for any matrix
$\mathbf{R}$, the derivative of its determinant is $\partial\left|\mathbf{R}\right|/\partial\mathbf{R}=\left|\mathbf{R}\right|\mathbf{R}^{-\top}$
\cite[Sec. 17.45]{Seber2008}. Notice that we use the unconstrained
case of the determinant derivative, since we sum over each pair of
coordinates, where $g\ne h$ or $j\ne k$, twice.
\end{proof}
\begin{rem}
If $\mathbf{R}$ is symmetric, then $\partial\left|\mathbf{R}\right|/\partial\mathbf{R}=\left|\mathbf{R}\right|\left[2\mathbf{R}^{-1}-\text{diag}\left(\mathbf{R}^{-1}\right)\right]$
\cite[Sec. 17.45]{Seber2008}. 
Using this fact, we may write \eqref{eq: cov det} in the alternative, and more computationally efficient form
\[
\text{acov}\!\left(\delta\!\left(\mathcal{U}\right),\delta\!\left(\mathcal{V}\right)\right)
=\sum_{\substack{g\le h \\ g,h\in\mathcal{U}}}
\sum_{\substack{j\le k \\ j,k\in\mathcal{V}}}
r_{\mathcal{U}}^{*}\left(g,h\right)r_{\mathcal{V}}^{*}\left(j,k\right)\text{acov}\!\left(\zeta_{gh},\zeta_{jk}\right)\text{,}
\]
where
\begin{eqnarray*}
\begin{bmatrix}
r_{\mathcal{U}}^{*}\!\left(\,u_{1},u_{1}\right) & r_{\mathcal{U}}^{*}\!\left(\,u_{1},u_{2}\right) & \cdots & r_{\mathcal{U}}^{*}\!\left(\,u_{1},u_{\left|\mathcal{U}\right|}\right)\\
r_{\mathcal{U}}^{*}\!\left(\,u_{2},u_{1}\right) & r_{\mathcal{U}}^{*}\!\left(\,u_{2},u_{2}\right) & \cdots & r_{\mathcal{U}}^{*}\!\left(\,u_{2},u_{\left|\mathcal{U}\right|}\right)\\
\vdots & \vdots & \ddots & \vdots\\
r_{\mathcal{U}}^{*}\!\left(u_{\left|\mathcal{U}\right|},u_{1}\right) & r_{\mathcal{U}}^{*}\!\left(u_{\left|\mathcal{U}\right|},u_{2}\right) & \cdots & r_{\mathcal{U}}^{*}\!\left(u_{\left|\mathcal{U}\right|},u_{\left|\mathcal{U}\right|}\right)
\end{bmatrix}
& = &
2\left|\mathbf{R}\left(\mathcal{U}\right)\right|\mathbf{R}^{-1}\left(\mathcal{U}\right)\\
 &   -& \left|\mathbf{R}\left(\mathcal{U}\right)\right|\text{diag}\left(\mathbf{R}^{-1}\left(\mathcal{U}\right)\right)\text{,}
\end{eqnarray*}
and $r_{\mathcal{V}}^{*}\left(j,k\right)$ ($j,k\in\mathcal{V}$)
is defined similarly.
\end{rem}
\subsection{The coefficient of determination}
Let \text{plim} denote convergence in probability, so that for any sequence $\{X_n\}$, and any random variable $X$, the statement
\[
\lim_{n \rightarrow \infty} \Pr(|X_n - X| > \varepsilon) = 0  \, ,
\]
for every $\varepsilon > 0$, can be written as $\plim_{n \rightarrow \infty} X_n = X$.

Now, recall definition \eqref{eq: R2 S}, and further let $\rho_{\mathcal{S}}^{2}=\plim_{n\rightarrow \infty}R_{\mathcal{S}}^{2}\left(\mathcal{Z}_{n}\right)$.
We adapt from and expand upon \cite[Thm. 2]{Hedges:1983aa} in the following result. This result also fixes typographical errors that appear in the original theorem, as well as in \cite{Hedges:1983ab}.
\begin{lem}
\label{lem R2}Assume the same conditions as in \cref{lem steiger}.
Then, the normalize coefficient of determination $\lambda\left(\mathcal{S}\right)=n^{1/2}\left(R_{\mathcal{S}}^{2}\left(\mathcal{Z}_{n}\right)-\rho_{\mathcal{S}}^{2}\right)$
(where $S$ and $\mathcal{T}$ are nonempty subsets of $\left[d\right]$)
converges to a jointly normal distribution, with asymptotic mean and
covariance elements $0$ and
\begin{eqnarray}
\text{\normalfont acov}\!\left(\lambda\!\left(\mathcal{S}\right),\lambda\!\left(\mathcal{T}\right)\right) & = & \frac{1}{\left|\mathbf{R}\left(\mathcal{S}\right)\right|\left|\mathbf{R}\left(\mathcal{T}\right)\right|}\text{ \normalfont acov}\!\left(\delta\!\left(\left\{ 0\right\} \cup\mathcal{S}\right),\delta\!\left(\left\{ 0\right\} \cup\mathcal{T}\right)\right)\nonumber \\
 &  & +\frac{\left|\mathbf{R}\left(\left\{ 0\right\} \cup\mathcal{S}\right)\right|\left|\mathbf{R}\left(\left\{ 0\right\} \cup\mathcal{T}\right)\right|}{\left|\mathbf{R}\left(\mathcal{S}\right)\right|^{2}\left|\mathbf{R}\left(\mathcal{T}\right)\right|^{2}}\text{\normalfont acov}\!\left(\delta\!\left(\mathcal{S}\right),\delta\!\left(\mathcal{T}\right)\right)\nonumber \\
 &  & -\frac{\left|\mathbf{R}\left(\left\{ 0\right\} \cup\mathcal{S}\right)\right|}{\left|\mathbf{R}\left(\mathcal{S}\right)\right|^{2}\left|\mathbf{R}\left(\mathcal{T}\right)\right|}\text{\normalfont acov}\!\left(\delta\!\left(\mathcal{S}\right),\delta\!\left(\left\{ 0\right\} \cup\mathcal{T}\right)\right)\nonumber \\
 &  & -\frac{\left|\mathbf{R}\left(\left\{ 0\right\} \cup\mathcal{T}\right)\right|}{\left|\mathbf{R}\left(\mathcal{S}\right)\right|\left|\mathbf{R}\left(\mathcal{T}\right)\right|^{2}}\text{\normalfont acov}\!\left(\delta\!\left(\left\{ 0\right\} \cup\mathcal{S}\right),\delta\!\left(\mathcal{T}\right)\right)\text{.}\label{eq: cov R2}
\end{eqnarray}
\end{lem}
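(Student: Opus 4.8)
The plan is to recognize each $R^2_{\mathcal{S}}(\mathcal{Z}_n)$ as a smooth function of the covariance determinants treated in the preceding lemma on the $\delta(\mathcal{U})$, and to apply the multivariate delta method. First I would stack, over all the nonempty subsets $\mathcal{S}\subseteq[d]$ under consideration, the normalized determinants $\delta(\{0\}\cup\mathcal{S})$ and $\delta(\mathcal{S})$ into a single random vector; the preceding lemma gives that this vector converges in distribution to a centered multivariate Gaussian whose covariance entries are given by \eqref{eq: cov det}. Writing $g(a,b)=1-a/b$, we have $R^2_{\mathcal{S}}(\mathcal{Z}_n)=g\!\left(|\mathbf{C}_n(\{0\}\cup\mathcal{S})|,\,|\mathbf{C}_n(\mathcal{S})|\right)$, and $g$ is continuously differentiable wherever its second argument is nonzero. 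Assuming $\mathbf{R}(\mathcal{S})$ is nonsingular for each relevant $\mathcal{S}$ (a mild nondegeneracy condition, satisfied, e.g., when the underlying elliptical law has positive-definite scatter), consistency of the sample correlations together with the continuous mapping theorem identifies the centering constant as $\plim_{n\to\infty}R^2_{\mathcal{S}}(\mathcal{Z}_n)=g\!\left(|\mathbf{R}(\{0\}\cup\mathcal{S})|,\,|\mathbf{R}(\mathcal{S})|\right)=\rho^2_{\mathcal{S}}$, so the $\lambda(\mathcal{S})$ are indeed asymptotically centered.

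Next I would assemble the Jacobian of the map carrying the determinant vector to $\left(R^2_{\mathcal{S}}(\mathcal{Z}_n)\right)_{\mathcal{S}}$. This Jacobian is sparse: the row indexed by $\mathcal{S}$ carries $\partial g/\partial a=-1/|\mathbf{R}(\mathcal{S})|$ in the column for $|\mathbf{C}_n(\{0\}\cup\mathcal{S})|$ and $\partial g/\partial b=|\mathbf{R}(\{0\}\cup\mathcal{S})|/|\mathbf{R}(\mathcal{S})|^2$ in the column for $|\mathbf{C}_n(\mathcal{S})|$, all evaluated at the population determinants, with zeros elsewhere. The delta method (e.g. \cite[Thm. 3.1]{vanderVaart1998}) then yields joint asymptotic normality of $\left(\lambda(\mathcal{S})\right)_{\mathcal{S}}$ with mean zero, the limiting covariance being the quadratic contraction of the Jacobian rows against the covariance array \eqref{eq: cov det}. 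Concretely, $\lambda(\mathcal{S})$ is asymptotically equivalent to $-|\mathbf{R}(\mathcal{S})|^{-1}\delta(\{0\}\cup\mathcal{S})+|\mathbf{R}(\{0\}\cup\mathcal{S})|\,|\mathbf{R}(\mathcal{S})|^{-2}\delta(\mathcal{S})$, and expanding $\text{acov}(\lambda(\mathcal{S}),\lambda(\mathcal{T}))$ bilinearly produces exactly four terms — two from pairing like determinant types and two cross terms — which, after using $\text{acov}(\delta(\mathcal{U}),\delta(\mathcal{V}))=\text{acov}(\delta(\mathcal{V}),\delta(\mathcal{U}))$, reproduce \eqref{eq: cov R2}, with the minus signs in the last two terms inherited from $\partial g/\partial a<0$.

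The only place care is genuinely required is bookkeeping. I would track which of $\{0\}\cup\mathcal{S}$, $\mathcal{S}$, $\{0\}\cup\mathcal{T}$, $\mathcal{T}$ plays the role of $\mathcal{U}$ and of $\mathcal{V}$ in each invocation of \eqref{eq: cov det}, keep the four weights $-|\mathbf{R}(\mathcal{S})|^{-1}$, $|\mathbf{R}(\{0\}\cup\mathcal{S})|\,|\mathbf{R}(\mathcal{S})|^{-2}$, $-|\mathbf{R}(\mathcal{T})|^{-1}$, $|\mathbf{R}(\{0\}\cup\mathcal{T})|\,|\mathbf{R}(\mathcal{T})|^{-2}$ aligned with the correct determinant, and combine the two cross terms properly; this is precisely the step at which the originals in \cite{Hedges:1983aa} and \cite{Hedges:1983ab} contain the typographical slips we are correcting. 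A one-line remark recording the nonsingularity hypothesis on $\mathbf{R}(\mathcal{S})$, needed for differentiability of $g$ at the limit point, completes the argument.
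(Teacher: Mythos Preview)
Your proposal is correct and follows essentially the same approach as the paper: apply the delta method to the map $g(x,y)=1-x/y$ composed with the determinant vector from the preceding lemma, using the gradient $\partial g/\partial(x,y)=(-1/y,\,x/y^{2})$ evaluated at the population determinants. The paper's proof is simply a terse statement of this, while you have spelled out the bilinear expansion and the bookkeeping that yields the four terms of \eqref{eq: cov R2}.
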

\begin{proof}
We apply the delta method again, using the functional form \eqref{eq: R2 S},
and using the fact that
\[
\frac{\partial}{\partial\left(x,y\right)}\left(1-\frac{x}{y}\right)=\left(-\frac{1}{y},\frac{x}{y^{2}}\right)\text{.}
\]
\end{proof}
\begin{rem}
When $\mathcal{S}=\mathcal{T}$, \eqref{eq: cov R2} yields the usual form for the asymptotic variance of $R_{\mathcal{S}}^{2}\left(\mathcal{Z}_{n}\right)$:
$4\kappa\rho_{\mathcal{S}}^{2}\left(1-\rho_{\mathcal{S}}^{2}\right)^{2}$
(cf. \cite{Yuan:2000aa}).
\end{rem}
%
\subsection{The Shapley values}
For every $j\in\left[d\right]$ and $\mathcal{S}\subseteq\mathcal{S}_{j}=\left[d\right]-\left\{ j\right\} $,
there are $\left|\mathcal{S}\right|!\left(d-\left|\mathcal{S}\right|-1\right)!$
elements of $\bm{\pi}\in\mathcal{P}$ such that $\mathcal{S}_{j}\left(\bm{\pi}\right)=\mathcal{S}$.
Thus, we may write
\begin{equation}
V_{j}\left(\mathcal{Z}_{n}\right)=\sum_{\mathcal{S}\subseteq\mathcal{S}_{j}}\omega\left(\mathcal{S}\right)\left[R_{\left\{ j\right\} \cup\mathcal{S}}^{2}\left(\mathcal{Z}_{n}\right)-R_{\mathcal{S}}^{2}\left(\mathcal{Z}_{n}\right)\right]\text{,}\label{eq: alt Shapley}
\end{equation}
where $\omega\left(\mathcal{S}\right)=\left|\mathcal{S}\right|!\left(d-\left|\mathcal{S}\right|-1\right)!/d!$,
and define $v_{j}=\plim_{n\rightarrow \infty}V_{j}\left(\mathcal{Z}_{n}\right)$.
Using this functional form \eqref{eq: alt Shapley}, we may apply the delta method once more, in order to derive the following joint asymptotic normal distribution result regarding the Shapley values $V_{j}\left(\mathcal{Z}_{n}\right)$, for $j\in\left[d\right]$. 
\begin{rem}
The form \eqref{eq: alt Shapley} is a useful computational trick that reduces the computational time of form \eqref{eq: shapley value} and results in more efficient computations for fixed $d$. 
It is unclear whether other formulations such as that of~\cite{hart1988potential} can make the computation time even faster.
Unfortunately, however, there is no formulation that reduces the $\mathcal{O}(2^d)$ scaling, as $d$ increases.
\end{rem}

\begin{thm}
\label{thm main}Assume the same conditions as in \cref{lem steiger}.
Then, the normalized Shaply values $\xi_{j}=n^{1/2}\left(V_{j}\left(\mathcal{Z}_n\right)-v_{j}\right)$
(where $j,k\in\left[d\right]$) converge to a jointly normal distribution,
with asymptotic mean and covariance elements $0$ and $\text{\normalfont acov}\!\left(\xi_{j},\xi_{k}\right)=a_{jk}+b_{jk}-c_{jk}-d_{jk}$.
Here,
\[
a_{jk}=\sum_{\mathcal{S}\subseteq\mathcal{S}_{j}}\sum_{\mathcal{T}\subseteq\mathcal{S}_{k}}\omega\left(\mathcal{S}\right)\omega\left(\mathcal{T}\right)\text{\normalfont acov}\!\left(\lambda\left(\left\{ j\right\} \cup\mathcal{S}\right),\lambda\left(\left\{ k\right\} \cup\mathcal{T}\right)\right)\text{,}
\]
\[
b_{jk}=\sum_{\mathcal{S}\subseteq\mathcal{S}_{j}}\sum_{\mathcal{T}\subseteq\mathcal{S}_{k}}\omega\left(\mathcal{S}\right)\omega\left(\mathcal{T}\right)\text{\normalfont acov}\!\left(\lambda\left(\mathcal{S}\right),\lambda\left(\mathcal{T}\right)\right)\text{,}
\]
\[
c_{jk}=\sum_{\mathcal{S}\subseteq\mathcal{S}_{j}}\sum_{\mathcal{T}\subseteq\mathcal{S}_{k}}\omega\left(\mathcal{S}\right)\omega\left(\mathcal{T}\right)\text{\normalfont acov}\!\left(\lambda\left(\mathcal{S}\right),\lambda\left(\left\{ k\right\} \cup\mathcal{T}\right)\right)\text{,}
\]
and
\[
d_{jk}=\sum_{\mathcal{S}\subseteq\mathcal{S}_{j}}\sum_{\mathcal{T}\subseteq\mathcal{S}_{k}}\omega\left(\mathcal{S}\right)\omega\left(\mathcal{T}\right)\text{\normalfont acov}\!\left(\lambda\left(\left\{ j\right\} \cup\mathcal{S}\right),\lambda\left(\mathcal{T}\right)\right)\text{,}
\]
where $\lambda\left(\mathcal{S}\right)$ is as defined in \cref{lem R2},
for nonempty subsets $\mathcal{S}\subseteq\left[d\right]$.
\end{thm}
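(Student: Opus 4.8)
The plan is to exploit the linear form \eqref{eq: alt Shapley}, which expresses the normalized Shapley value $\xi_j$ as a fixed linear combination of the normalized coefficients of determination $\lambda(\mathcal{S})$ of \cref{lem R2}; the theorem then follows by one further application of the delta method --- to a linear map --- together with the bilinearity of the asymptotic covariance. First I would collect the finite family $\{\lambda(\mathcal{S}) : \mathcal{S}\subseteq[d],\ \mathcal{S}\neq\emptyset\}$ into a single random vector, which \cref{lem R2} asserts is jointly asymptotically normal with mean zero and covariances given by \eqref{eq: cov R2}. It is harmless to adjoin a coordinate for $\mathcal{S}=\emptyset$: since the empty-set coefficient of determination is defined to be $0$, its normalized version $\lambda(\emptyset)$ is identically $0$, so this degenerate coordinate contributes nothing to any covariance while making the indexing uniform for the $\mathcal{S}=\emptyset$ terms appearing in \eqref{eq: alt Shapley}. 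Because $\plim$ acts linearly, the probability limit of \eqref{eq: alt Shapley} is $v_j = \sum_{\mathcal{S}\subseteq\mathcal{S}_j}\omega(\mathcal{S})(\rho^2_{\{j\}\cup\mathcal{S}} - \rho^2_{\mathcal{S}})$, and subtracting this from \eqref{eq: alt Shapley} and scaling by $n^{1/2}$ yields the exact identity
\[
\xi_j = \sum_{\mathcal{S}\subseteq\mathcal{S}_j}\omega(\mathcal{S})\left[\lambda(\{j\}\cup\mathcal{S}) - \lambda(\mathcal{S})\right]\text{.}
\]

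Next I would stack $\xi_1,\dots,\xi_d$ and observe that the display above realizes this vector as a deterministic linear image of the $\lambda$-vector above, the coefficients being the signed weights $\pm\omega(\mathcal{S})$. Applying the delta method (\cite[Thm. 3.1]{vanderVaart1998}) with this linear map --- equivalently, using the elementary fact that a linear image of an asymptotically normal vector is again asymptotically normal --- gives joint asymptotic normality of $(\xi_1,\dots,\xi_d)$ with mean zero. The covariance entries I would then obtain by bilinearity of the asymptotic covariance:
\[
\text{acov}(\xi_j,\xi_k) = \sum_{\mathcal{S}\subseteq\mathcal{S}_j}\sum_{\mathcal{T}\subseteq\mathcal{S}_k}\omega(\mathcal{S})\omega(\mathcal{T})\,\text{acov}\!\left(\lambda(\{j\}\cup\mathcal{S}) - \lambda(\mathcal{S}),\,\lambda(\{k\}\cup\mathcal{T}) - \lambda(\mathcal{T})\right)\text{,}
\]
and multiplying out the difference of differences inside each asymptotic covariance produces four cross terms, which one checks are precisely $a_{jk}$, $b_{jk}$, $-c_{jk}$, and $-d_{jk}$ --- giving the claimed formula $\text{acov}(\xi_j,\xi_k) = a_{jk} + b_{jk} - c_{jk} - d_{jk}$.

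The argument carries essentially no analytic content, since all the probabilistic work --- joint asymptotic normality of the sample correlations, of the correlation-matrix determinants, and hence of the $R^2_{\mathcal{S}}(\mathcal{Z}_n)$ --- has already been done in \cref{lem steiger} and \cref{lem R2}. The points that require a little care are: (i) that the probability limits $\rho^2_{\mathcal{S}}$ exist and are finite, so that $\xi_j$ and $v_j$ are well defined, which is inherited from \cref{lem R2}; (ii) the bookkeeping of the degenerate coordinate $\lambda(\emptyset) = 0$, so that the $b_{jk}$, $c_{jk}$, and $d_{jk}$ terms with an empty index simply vanish; and (iii) getting the signs right when the product $(\lambda(\{j\}\cup\mathcal{S}) - \lambda(\mathcal{S}))(\lambda(\{k\}\cup\mathcal{T}) - \lambda(\mathcal{T}))$ is expanded into the $+,+,-,-$ pattern of $a,b,c,d$. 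The only genuine obstacle is organizational: correctly managing the nested sums over pairs of subsets and the four covariance components --- which is, one suspects, exactly where the typographical errors in the antecedent references crept in.
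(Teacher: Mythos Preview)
Your proposal is correct and follows essentially the same route as the paper: the paper simply notes that, given the linear form \eqref{eq: alt Shapley}, one more application of the delta method to the jointly asymptotically normal $\lambda(\mathcal{S})$ from \cref{lem R2} yields the result. Your write-up is in fact more detailed than the paper's own justification, correctly handling the $\lambda(\emptyset)=0$ bookkeeping and the sign pattern in the bilinear expansion.
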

Using the result above, we may apply the delta method again in order
to construct asymptotic CIs or hypothesis tests regarding any continuous
function of the $d$ Shapley values for the coefficient of determination.
Of particular interest is the asymptotic CI for each of the individual
Shapley values and the hypothesis test for the difference between
two Shapley values.

The asymptotic $100\left(1-\alpha\right)\%$ CI for the $j\text{th}$
expected Shapley value $v_{j}$ has the usual form
\[
\left(V_{j}\left(\mathcal{Z}_{n}\right)-\Phi^{-1}\left(1-\frac{\alpha}{2}\right)\sqrt{\frac{\text{avar}\left(\xi_{j}\right)}{n}},V_{j}\left(\mathcal{Z}_{n}\right)+\Phi^{-1}\left(1-\frac{\alpha}{2}\right)\sqrt{\frac{\text{avar}\left(\xi_{j}\right)}{n}}\right)\text{,}
\]
where $\text{avar}\left(\xi_{j}\right)=\text{\normalfont acov}\!\left(\xi_{j},\xi_{j}\right)$
denotes the asymptotic variance of $\xi_{j}$ and $\Phi^{-1}$ is
the inverse cumulative distribution function of the standard normal
distribution. The $z$-statistic for the test of the null and alternative
hypotheses

\begin{equation}
    \text{H}_{0}:v_{j} = v_{k}
    \text{ and }
    \text{H}_{1}:v_{j}\ne v_{k}\text{,}
    \label{eq: two shapley test}
\end{equation}
for $j,k\in\left[d\right]$ such that $j\ne k$, is

\[
\Delta_{n}=\frac{\sqrt{n}\left(V_{j}\left(\mathcal{Z}_{n}\right)-V_{k}\left(\mathcal{Z}_{n}\right)\right)}{\sqrt{\text{avar}\left(\xi_{j}\right)+\text{avar}\left(\xi_{k}\right)-2\text{\normalfont acov}\!\left(\xi_{j},\xi_{k}\right)}}\text{,}
\]
where $\Delta_{n}$ has an asymptotic standard normal distribution.
\begin{rem}
In practice, we do not know the necessary elements $\kappa$ and $\rho_{jk}$
$\left(j,k\in\left\{ 0\right\} \cup\left[d\right];j\ne k\right)$
that are required in order to specify the asymptotic covariance terms
in \cref{lem steiger}--\cref{lem R2} and \cref{thm main}.
However, by Slutsky's theorem, we have the usual result that any $\text{acov}$
(or $\text{avar}$) term can be replaced by the estimator $\widehat{\text{acov}}_{n}$
(or $\widehat{\text{avar}}_n$), which replaces $\kappa$ by the estimator
of \cite{Mardia:1970aa}
\[
\hat{\kappa}\left(\mathcal{Z}_{n}\right)=\frac{\sum_{i=1}^{n}\left[\left(\bm{Z}_{i}-\bar{\bm{Z}}\right)^{\top}\hat{\bm{\Sigma}}_{n}^{-1}\left(\left\{ 0\right\} \cup\left[d\right]\right)\left(\bm{Z}_{i}-\bar{\bm{Z}}\right)\right]^{2}}{n\left(d+1\right)\left(d+3\right)}\text{,}
\]
and replaces $\rho_{jk}$ by $R_{jk}\left(\mathcal{Z}_{n}\right)$,
where $\bar{\bm{Z}}^{\top}=\left(\bar{Z}_{0},\dots,\bar{Z}_{d}\right)$. Here,
\[
\hat{\bm{\Sigma}}_{n}\left(\mathcal{\mathcal{U}}\right)=\left[\begin{array}{cccc}
\hat{\Sigma}_{u_{1}u_{1}}\left(\mathcal{Z}_{n}\right) & \hat{\Sigma}_{u_{1}u_{2}}\left(\mathcal{Z}_{n}\right) & \cdots & \hat{\Sigma}_{u_{1}u_{\left|\mathcal{U}\right|}}\left(\mathcal{Z}_{n}\right)\\
\hat{\Sigma}_{u_{2}u_{1}}\left(\mathcal{Z}_{n}\right) & \hat{\Sigma}_{u_{2}u_{2}}\left(\mathcal{Z}_{n}\right) & \cdots & \hat{\Sigma}_{u_{2}u_{\left|\mathcal{U}\right|}}\left(\mathcal{Z}_{n}\right)\\
\vdots & \vdots & \ddots & \vdots\\
\hat{\Sigma}_{u_{\left|\mathcal{U}\right|}u_{1}}\left(\mathcal{Z}_{n}\right) & \hat{\Sigma}_{u_{\left|\mathcal{U}\right|}u_{2}}\left(\mathcal{Z}_{n}\right) & \cdots & \hat{\Sigma}_{u_{\left|\mathcal{U}\right|}u_{\left|\mathcal{U}\right|}}\left(\mathcal{Z}_{n}\right)
\end{array}\right],
\]
where $\hat{\Sigma}_{jk}\left(\mathcal{Z}_{n}\right)$ is the sample covariance between the $j$th and $k$th dimension of $\bm{Z}$ ($j,k\in\{0\}\cup[d]$). For example, the estimated test statistic
\begin{equation}\label{eq:delta_n}
\hat{\Delta}_{n}=\frac{\sqrt{n}\left(V_{j}\left(\mathcal{Z}_{n}\right)-V_{k}\left(\mathcal{Z}_{n}\right)\right)}{\sqrt{\widehat{\text{avar}}_n\left(\xi_{j}\right)+\widehat{\text{avar}}_n\left(\xi_{k}\right)-2\widehat{\text{acov}}_n\left(\xi_{j},\xi_{k}\right)}}\text{,}
\end{equation}
for the hypotheses \eqref{eq: two shapley test}, retains the property
of having an asymptotically standard normal distribution.
\end{rem}
\section{Monte Carlo studies and benchmarks}\label{s: monte}
In each of the following three Monte Carlo studies, we simulate a large number $N$ of random samples $\mathcal{Z}_n^{(i)}, \; i \in [N]$, of size $n$, from a chosen distribution $\mathcal{D}$.
For each sample, we apply \cref{lem steiger} to calculate an asymptotic 95\% CI for the first Shapley value $v_1$, producing a set of $N$ observed intervals $\mathcal{I}_N = \left\{[\ell_i, u_i]\, :\, i \in [N]\right\}$, as realisations of the CI $[L,U]$ for $v_1$. The coverage probability $\text{covr}_{v_1}([L,U]) := \Pr (v_1 \in [L,U])$ is then estimated as the proportion of intervals in $\mathcal{I}_N$ containing the population Shapley value
\begin{equation} \label{eq: est coverage}
\widehat{\text{covr}}_{v_1}([L,U]) = \frac{|\{[\ell_i, u_i] \in \mathcal{I}_N \, :\, v_1^{} \in [\ell_i, u_i]\}|}{N}.
\end{equation}
Here, the population Shapley value $v_1$ has the form:
\[
v_1 = 
\sum_{\mathcal{S}\subseteq\mathcal{S}_{1}}\omega\left(\mathcal{S}\right)\left[R_{\left\{ 1\right\} \cup\mathcal{S}}^{2}-R_{\mathcal{S}}^{2}\right],
\]
where $R_{\mathcal{S}}^{2}$ is defined by replacing $\mathbf{C}_n$ in \eqref{eq: R2 S} by the known population correlation matrix $\text{cor}(\mathbf{Z}),$ as determined by the chosen distribution $\mathcal{D}$. In Studies A and B, this population correlation matrix is the $(d+1)\times(d+1)$ matrix $\mathbf{\Sigma}$, with diagonal elements equal to 1 and off-diagonal elements equal to a constant correlation $c \in [0,1)$. That is,
\begin{equation} \label{eq: sigma}
\mathbf{\Sigma} = c\,\mathbf{J}_{d+1} + (1-c)\,\mathbf{I}_{d+1},
\end{equation}
where $\mathbf{J}_{d+1}$ denotes a 
$(d+1)\times(d+1)$ matrix with all entries equal to $1$. Note that, for fixed variance, larger magnitudes of $c$ map to larger regression coefficients. Thus, these simulations can be viewed as assigning equal regression coefficients to each covariates that are increasing for increasing values of $c$.

In Study C, we are concerned with covariance matrices with off-diagonal elements deviating from $c$. We aim to capture a case where the off-diagonal elements of $\text{cor}(\mathbf{Z})$ are not uniform, and where some may be negative. This is achieved by sampling $\mathcal{Z}_n^{(i)}$ from a multivariate normal distribution $\mathcal{D}_i$, with a symmetric positive definite covariance matrix $\mathbf{\Sigma}_i$ that is sampled at random from a Wishart distribution with scale matrix $\mathbf{\Sigma}$; see \Cref{sec: mc-c}. Accordingly, the population Shapley value $v_1^{(i)}$ is unique to sample $i$, and thus we adjust the coverage estimator $\widehat{\text{covr}}_{v_1}([L,U])$ by replacing $v_1$ on the RHS of \ref{eq: est coverage} by $v_1^{(i)}$. 

 To accompany our estimates of $\widehat{\text{covr}}_{v_1}([L,U])$, we also provide Clopper-Pearson CIs \cite{10.2307/2331986} for the coverage probability. We also report the average CI widths, and middle 95\% percentile intervals for the widths. 
 For comparison, we estimate coverage probabilities of non-parametric bootstrap confidence intervals in each of the three studies. 
 To obtain the bootstrap CIs, we set some large number $N_b$ and take random resamples $\mathcal{R}_n^{(r)}, r \in [N_b]$, of size $n$, with replacement, from $\mathcal{Z}^{(i)}_n$. From these resamples, we calculate the set of estimated Shapley values 
 $\mathcal{L}_i = \left\{V_{1}\left(\mathcal{R}_n^{(r)}\right) \, :\, r \in [N_b]\right\}$. The $i$th 95\% bootstrap CI is then taken as the middle 95\% percentile interval of $\mathcal{L}$, and the coverage is estimated as in \ref{eq: est coverage}. 

To obtain results for each pair $(n,c)\in\mathcal{N}\times \mathcal{C}$, where $\mathcal{N} = \{5,10,\ldots,50\}\cup\{100,200,\ldots,2000\}$ and $\mathcal{C} = \{0,0.1,0.2,0.3,0.6,0.9,0.99\}$, we performed $30\times7 = 210$ simulations for each of the three studies. 
We use $N = N_b = 1000$ and $d = 3$, in all cases.
\subsection{MC Study A}\label{sec: mc-a}
Here, we choose $\mathcal{D} = N_{d+1}(\mathbf{0},\mathbf{\Sigma})$, so that each sample $\mathcal{Z}^{(i)}_n,$ for $i \in [N],$ is drawn from a multivariate normal distribution, with covariance $\mathbf{\Sigma}$ given in \eqref{eq: sigma}. 

The simulation results in \Cref{fig: A-large} (and in \Cref{fig: A-small}, \Cref{appendix:figures}) show very similar coverage and width performance between the two assessed CIs for moderate and high correlations $c > 0.3$. For lower correlations $c \leq 0.2$, coverage convergence appears to be slower in $n$ than the bootstrap CI for large sample sizes ($n \geq 100$). 
The opposite trend seems to hold for small sample sizes ($n \leq 50$), see the discussion under MC Study C. Also, for the highest correlation $c = 0.99$, coverage performance of the asymptotic CI is overall slightly better than the bootstrap CI.

\begin{figure}[H]
    \begin{center}
        \includegraphics[width = \textwidth]{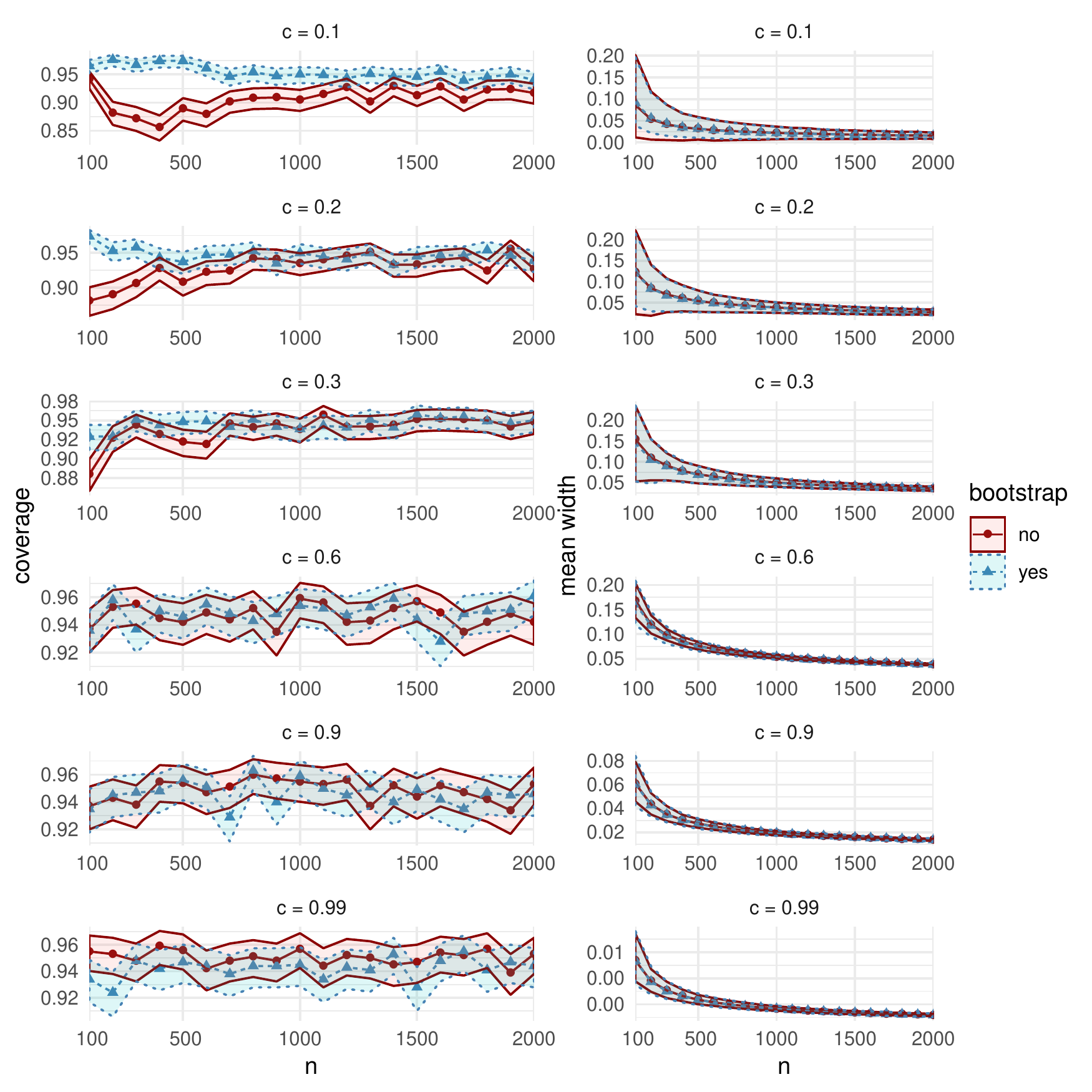}
    \end{center}
    \caption{Comparisons of coverage (left column) and mean width (right column), between the bootstrap CIs (dashed lines with triangular markers) and the asymptotic CIs (solid lines with circular markers) in MC Study A. Rows represent correlations $c$, increasing from $0.1$ on the top row to $0.99$ on the bottom row. The horizontal axes display the sample sizes $n = 100,200,\ldots,2000$.}
    \label{fig: A-large}
\end{figure} 

\subsection{MC Study B}
Here, we choose $\mathcal{D} = t_\nu(\mathbf{0},\mathbf{\Sigma})$ where $t_\nu(\mathbf{\mu},\mathbf{\Sigma})$ is the multivariate Student $t$ distribution with $\nu\in(0,\infty)$ degrees of freedom, mean vector $\mathbf{\mu},$ and scale matrix $\mathbf{\Sigma}.$ Specifically, the $i$th sample $\mathcal{Z}^{(i)}_n$, for $i \in [N]$, we set $\nu = 100$ degrees of freedom, and set $\mathbf{\Sigma}$ as the $(d+1)\times(d+1)$ covariance matrix in \ref{eq: sigma}.

For all sample sizes $n$ and correlations $c$, coverage and width performances are similar to MC Study A (see \Cref{fig: B-small} and \Cref{fig: B-large} in \Cref{appendix:figures}). Of particular interest, in both MC Studies A and B (but not in MC Study C), we observe that for $c = 0$, the estimated coverage probability of the asymptotic CI is almost equal to 1, for all sample sizes greater than 10, while the corresponding bootstrap CIs have estimated coverage equal to 0 (\Cref{fig: B-small-c0} left). Despite this, the average CI widths, though large under small samples, are somewhat smaller than those for bootstrap (\Cref{fig: B-small-c0} right).

\begin{figure}[H]
    \begin{center}
  \includegraphics[width = \textwidth]{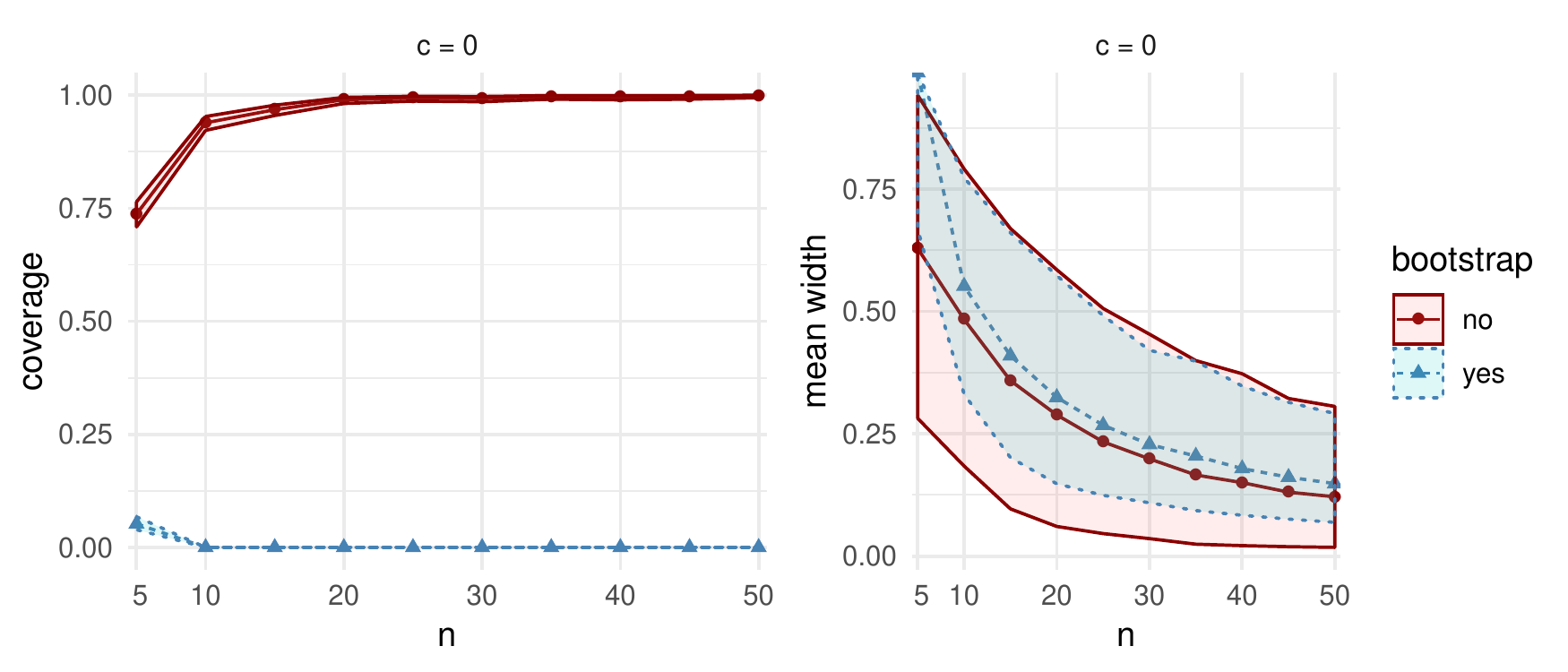}
    \end{center}
    \caption{Comparisons of coverage (left column) and mean width (right column) between bootstrap CIs (dashed lines with triangular markers) and the asymptotic CIs (solid lines with circular markers) in MC Study B, for correlation $c = 0$. The horizontal axes display the sample sizes $n = 5,10,\ldots,50$.}
    \label{fig: B-small-c0}
\end{figure} 

\subsection{MC Study C} \label{sec: mc-c}
Here, we set $\mathcal{D}_i = N_{d+1}(0,\mathbf{\Sigma}_i)$, so that the sample $\mathcal{Z}^{(i)}_n$, for $i \in N$, is drawn from a multivariate normal distribution, 
with covariance matrix $\mathbf{\Sigma}_i$ realised from a Wishart distribution $W_{d+1}(\mathbf{\Sigma}, \nu)$ with scale matrix $\mathbf{\Sigma}$ and $\nu$ degrees of freedom. 
This set up is different from Studies A and B in that the distributions $\mathcal{D}_i$, and therefore the population Shapley values $v_1^{(i)}$, are allowed to differ between samples. 

The distribution $W_{d+1}(\mathbf{\Sigma}, \nu)$ can be understood as the distribution of the sample covariance matrix of a sample of size $\nu+1$ from the distribution $N_{d+1}(0,\mathbf{\Sigma})$  
(cf \cite{fujikoshi2011}). 
This implies that each covariance matrix $\mathbf{\Sigma}_i$ can have non-uniform and negative off-diagonal elements, with variability between the off-diagonal elements increasing as $\nu$ decreases. 
For this study, we set $\nu = 100$. 

Aside for the case $c = 0$, coverage and width statistics are again similar to MC Studies A and B, for all $n$ and $c$ (see \Cref{appendix:figures}). Interestingly, in all three Studies, for small sample sizes ($n \leq 50$), coverage is often higher than for the bootstrap CI, with slightly smaller average widths, as seen in \Cref{fig: C-small} (and in \Cref{fig: A-small} and \Cref{fig: B-small} in \Cref{appendix:figures}).
For the $c = 0$ case, the observed behaviour differs from MC Studies A and B, with bootstrap performing comparatively well for large sample sizes (\Cref{fig: C-small-c0}).

\begin{figure}[H]
    \begin{center}
        \includegraphics[width = \textwidth]{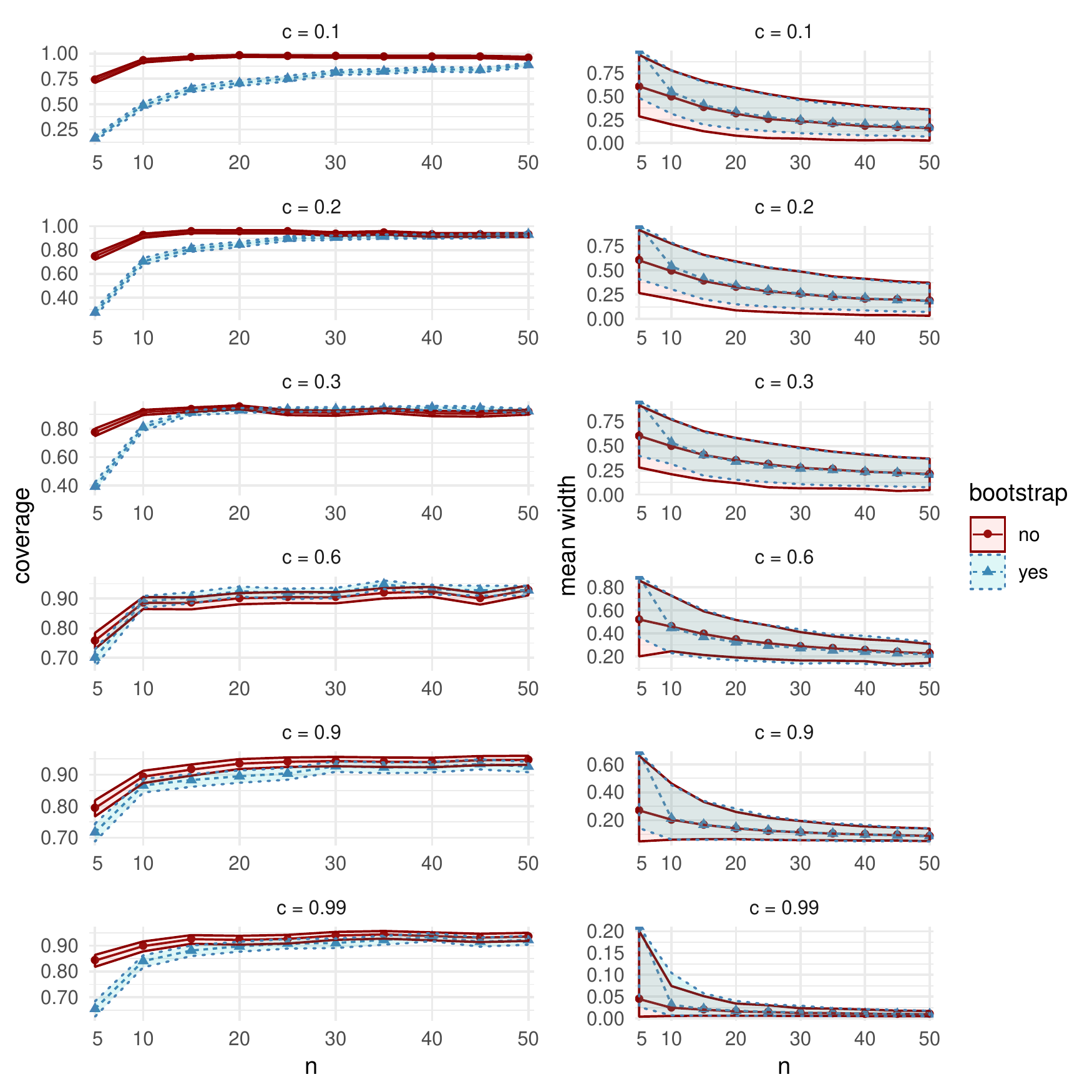}
    \end{center}
    \caption{Comparisons of coverage (left column) and mean width (right column), between the bootstrap CIs (dashed lines with triangular markers) and the asymptotic CIs (solid lines with circular markers) in MC Study C. Rows represent  correlations $c$, increasing from $0.1$ on the top row to $0.99$ on the bottom row. The horizontal axes display the sample sizes $n = 5,10,\ldots,50$.}
    \label{fig: C-small}
\end{figure} 

\begin{figure}[H]
    \begin{center}
  \includegraphics[width = \textwidth]{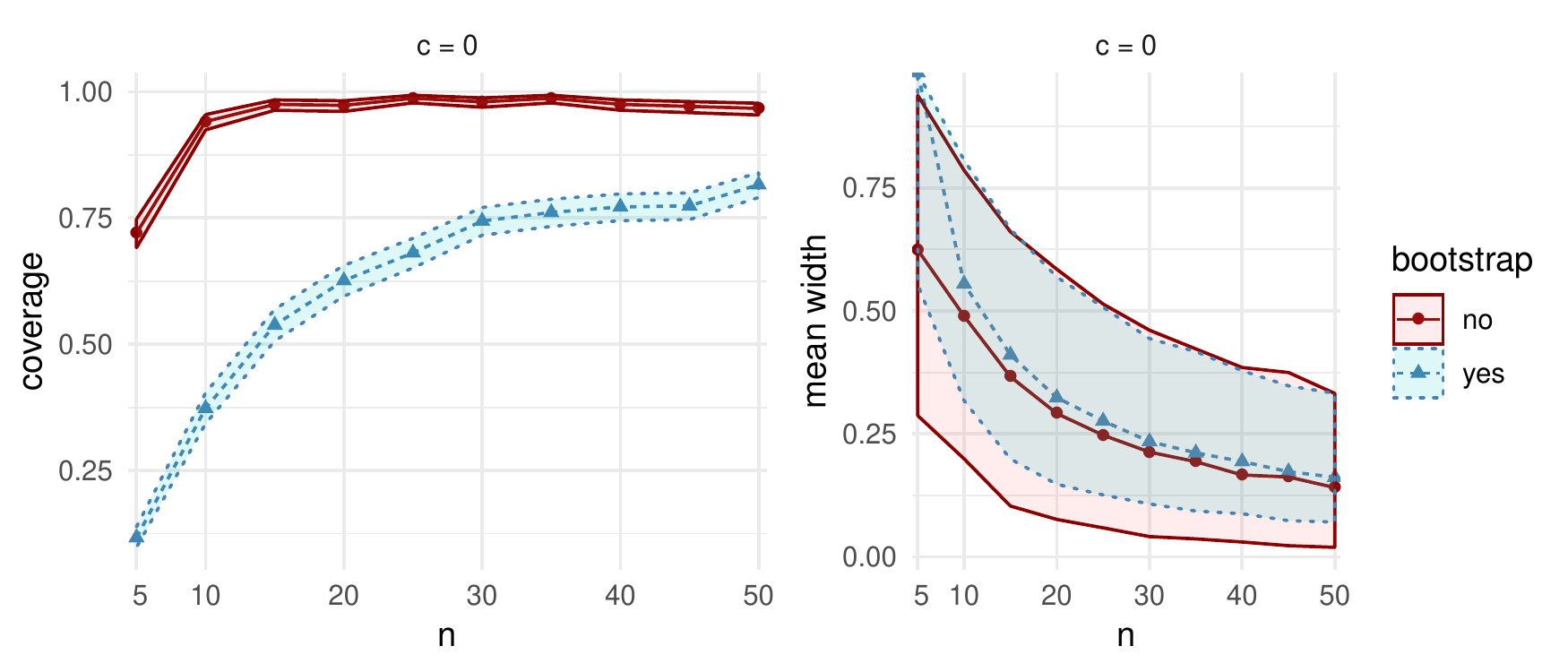}
    \end{center}
    \caption{Comparisons of coverage (left column) and mean width (right column) between bootstrap CIs (dashed lines with triangular markers) and the asymptotic CIs (solid lines with circular markers) in MC Study C, for correlation $c = 0$. The horizontal axes display the sample sizes $n = 5,10,\ldots,50$.}
    \label{fig: C-small-c0}
\end{figure} 

\begin{figure}[H]
    \begin{center}
  \includegraphics[width = \textwidth]{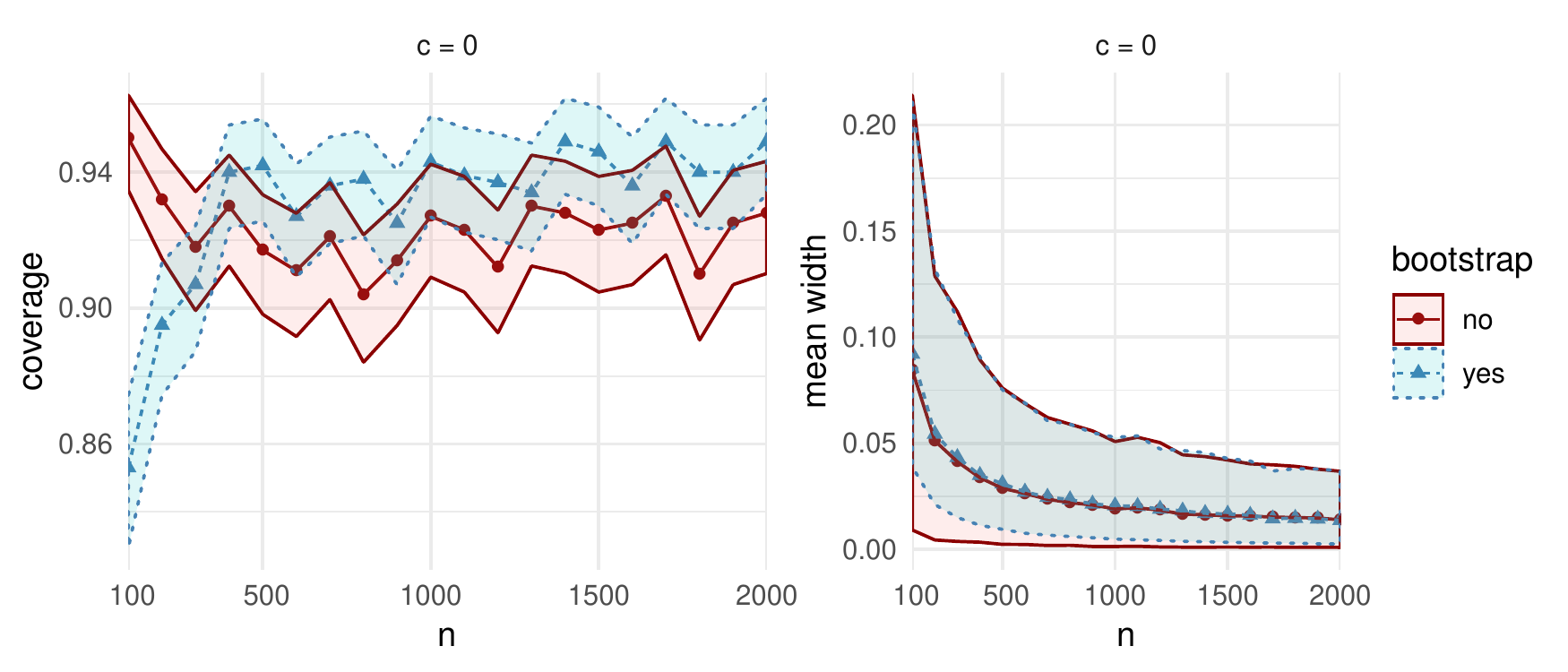}
    \end{center}
    \caption{Comparisons of coverage (left column) and mean width (right column) between bootstrap CIs (dashed lines with triangular markers) and the asymptotic CIs (solid lines with circular markers) in MC Study C, for correlation $c = 0$. The horizontal axes display the sample sizes $n = 100,200,\ldots,2000$.}
    \label{fig: C-large-c0}
\end{figure}
\subsection{Computational benchmarks}
From \Cref{fig: benchmark}, we see that the memory usage (left) and mean execution time (right) for the bootstrap CIs are both higher than that for the asymptotic CIs, and that the ratio increases with sample size. 
As $n$ increases, asymptotic CIs become increasingly efficient, compared to the bootstrap CIs. 
On the other hand, as $d$ increases, with $n$ fixed, we expect an increase in the relative efficiency of the bootstrap, since the complexity of calculating $\text{acov}(\xi_j, \xi_k)$ in \cref{thm main} grows faster in $d$ than the complexity of the bootstrap procedure.

\begin{table}[!h]
    \centering
    \caption{Parameters for computational benchmarking.}
    \begin{tabular}{ll}
        \toprule
         Parameter & Value(s)  \\
         \midrule
          Number of features ($d$) &  3\\
          Sample sizes ($n$) & 
          1000, 5000, 10000 \\
          Number of bootstrap resamples ($N_b$) &
          1000 \\
          Number of simulation repetitions ($N$) &
          1000 \\
         \bottomrule
    \end{tabular}
    \label{tab:bechmark}
\end{table}{}

\begin{figure}[!h]
    \begin{center}
         \includegraphics[width=\textwidth]{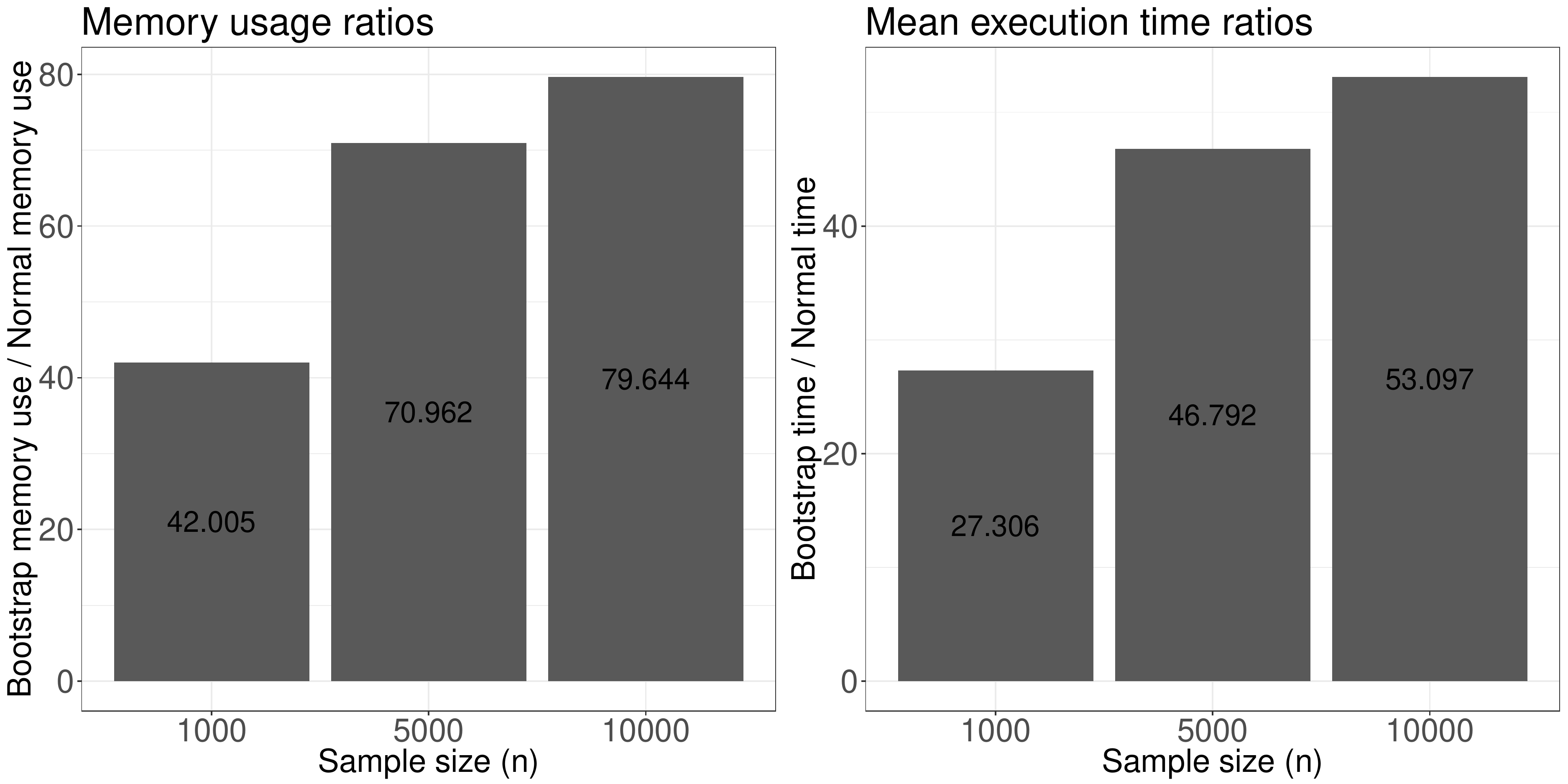}
    \end{center}
    \caption{Computational benchmark metric ratios of confidence interval estimation using the na{\"\i}ve bootstrap over the asymptotic normality approach.}
    \label{fig: benchmark}
\end{figure}
\subsection{Summary of results and recommendations for use}\label{ss: recommendations}
Below follows a summary of the general tendencies and observations from our results, and recommendations regarding when to use the asymptotic CIs.
\begin{itemize}
  \item For all correlations $c$ in all three Studies, 
  the estimated coverage probability of asymptotic intervals 
  is above $0.85$ for all sample sizes $n \geq 10$.
  \item For smaller correlations and sample sizes, in particular $c \geq 0.2$ and $n > 15$, the lower bound of the confidence interval for coverage never drops below $0.85$.
  \item For all correlations $c \geq 0.3$ and sample sizes $n > 100$, the lower 
  bound of the confidence interval for coverage 
  never drops below $0.9$.
  \item For small correlations, in particular $c \leq 0.1$ and sample sizes $10 \leq n \leq 100$, the lower 
  bound of the confidence interval for the coverage of the asymptotic CIs  never drops below $0.91$.
  \item For $c = 0$ and $n \geq 15$, the lower bound of the asymptotic CI for coverage never drops below $0.95$ in Studies A and B,  while in Study C the lower bound is at least $0.88$.
  \item For sample sizes $15 \leq n \leq 50$, the coverage of the asymptotic CI tends to be higher when $c$ is closer to the boundaries of $[0,1]$, as shown in \Cref{fig: cov-vs-cor-small}. 
  \item The average asymptotic CI width is lower when $c$ is nearer to the boundaries of $[0,1]$, see \Cref{fig: CI-widths}.
\end{itemize}

\begin{figure}[!h]
    \begin{center}
       \includegraphics[width=\textwidth]{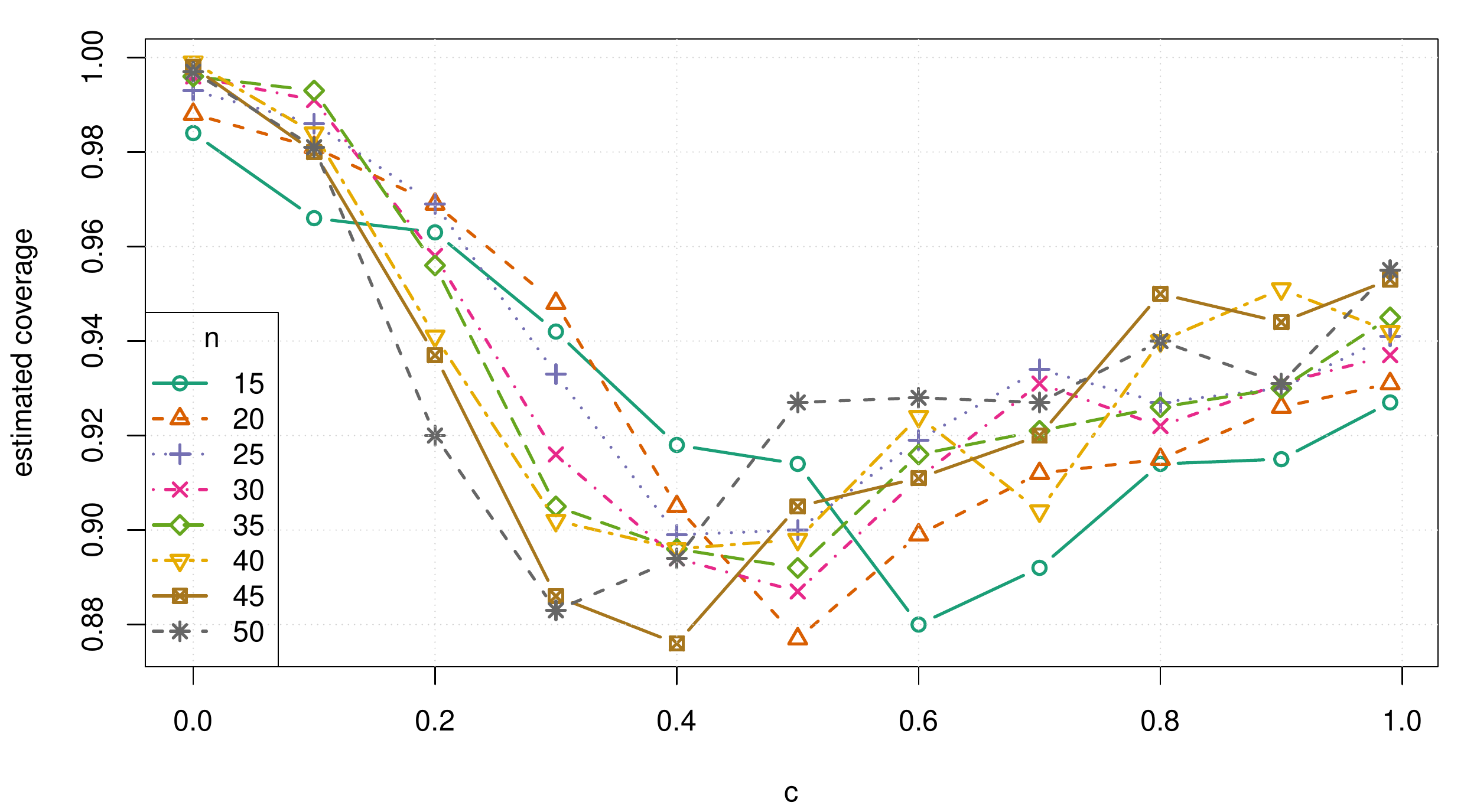}
    \end{center}
    \caption{Estimated coverage probability versus correlation $c$, for small samples sizes $n = 15, 20, \ldots, 50$, in MC Study A. The same patterns can be observed for Studies B and C.}
    \label{fig: cov-vs-cor-small}
\end{figure}

\begin{figure}[!h]
    \begin{center}
        \includegraphics[width=\textwidth]{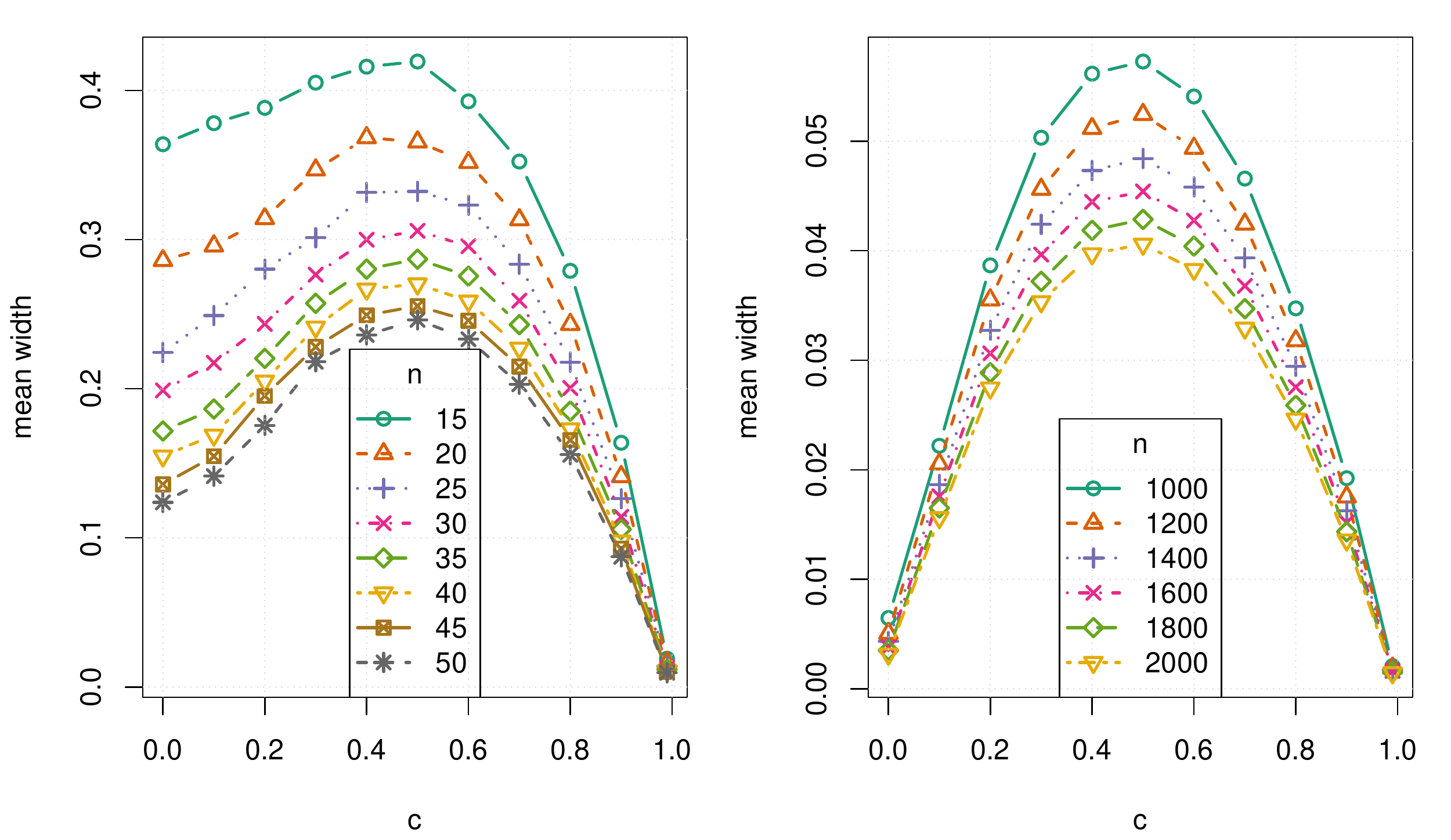}
    \end{center}
    \caption{Average confidence interval width versus correlation $c$, for small sample sizes $n = 15, 20, \ldots, 50$ (left) and large sample sizes $n = 1000, 1200, \ldots, 2000$ (right), in MC Study A. The same patterns are present in MC Study B and C.}
    \label{fig: CI-widths}
\end{figure}

We now make some general observations which apply to all three Studies.
As sample size increases, the estimated coverage initially increases rapidly,
as can be seen, for example, in the left column of \Cref{fig: C-small}. 
For small sample sizes between $n = 5$ and $n = 50$, the asymptotic CIs
typically outperform bootstrap CIs, especially when $c$ lies farther
from $0.5$; there is a clear drop in coverage as $c$
approaches $0.5$, for small samples, as can be seen in \Cref{fig: cov-vs-cor-small}. 
In many cases, the estimated coverage is above $0.9$ for $n \ge 10$. 
However, empirical coverage does not appear to be an increasing function of sample size in general. 
On the top row in the left column of \Cref{fig: A-large}, we observe one example of a clear 
and extended dip in coverage for $n$ in $[100,1000]$.
This gives rise to the general observation that the asymptotic intervals have preferable
coverage statistics over bootstrap for small samples, but not for a certain range of large samples, depending on $c$.

We further observe that, for all $n$, there is a general increase in the average CI width as $c$ approaches $0.5$ from either direction, as in \Cref{fig: CI-widths}.
In all Studies, over all sample sizes and correlations, the bootstrap CI average widths were smaller than the asymptotic CI widths by at most $0.0289$, and vice versa by at most $0.0667$. In general, the asymptotic intervals display favourable widths, though less so near $c = 0.5$.

Based on these observations, we recommend using asymptotic CIs over bootstrap CIs under the following conditions:
\begin{enumerate}[label={(\roman*)}]
    \item Computational time is relevant (e.g., estimating a large number of Shapley values).
    \item The sample size is small (e.g., $n \le 50$).
    \item The correlation between explanatory variables and the response variable is expected to be beyond $\pm0.2$ from $0.5$, or when this is where the highest precision is desired.
\end{enumerate}
We note that our observations are made from an incomplete albeit comprehensive set of simulation scenarios. There are of course an infinite number of combinations of simulation cases and thus we cannot guarantee that our observation applies to all possible DGPs.
\section{Application: Melbourne real estate and COVID-19}\label{sec:realestate}
For an interesting application of our methods, in this section we identify significant changes in the behaviour of the local real estate market in Melbourne, Australia, within the period from 1 February to 1 April, between the years 2019 and 2020. In 2020, this corresponds to an early period of growing public concern regarding the novel coronavirus COVID-19. We obtain the Shapley decomposition of the coefficient of multiple correlation $R^{2}$ between observed house prices and a number of property features. We also find significant differences in behaviour between real estate \textit{near} and \textit{far} from the Central Business District (CBD), where \textit{near} is defined to be within $\unit[25]{km}$ of Melbourne CBD, and \textit{far} is defined as non-\textit{near} (see~\Cref{fig:suburb_map}). Note that the nature of this investigation is exploratory and expository; our conclusions are not intended to be taken as evidence for the purposes of policy or decision making.
\\

On 1 February the Australian government announced a temporary ban on foreign arrivals from mainland China, and by 1 April a number of social distancing measures were in place. 
We scraped real estate data from the AUHousePrices website (\url{https://www.auhouseprices.com/}),
to obtain a data set of $13{,}291$ (clean) house sales between 1 January and 18 July in 2019 and 2020. 
We then reduced this date range to capture only the spike in sales observed between 1 February and 1 April (see \Cref{fig:dates_barplot}), giving a remaining sample size of $5110$, which was partitioned into the four subgroups in \Cref{table:subgroups}. Within each of the four subgroups we perform a Yeo-Johnson transformation \cite{10.2307/2673623} to reduce any violation of the assumption of joint pseudo-ellipticity. 
\begin{table}[!h]
    \centering
    \caption{The four subgroups and their sample sizes after partitioning by distance (where \textit{near} $:=$ within $\unit[25]{km}$ of CBD), and year of sale in the period 1 February to 1 April 2019 and 2020.}
    \begin{tabular}{lllll}
        \toprule
         Subgroup: & \textit{near} (2019) & \textit{far} (2019) & \textit{near} (2020) & \textit{far} (2020)  \\
         \midrule
         Sample size:   & 1203 &  953 & 1824 & 1130 \\
         \bottomrule
    \end{tabular}
    \label{table:subgroups}
\end{table}{}

We decompose $R^2$ amongst the covariates: distance to CBD (\textit{CBD}); images used in advertisement (\textit{images}); property land size (\textit{land}); distance to nearest school (\textit{school}); distance to nearest station (\textit{station}); and number of bedrooms $+$ bathrooms $+$ car ports (\textit{room}); along with the response variable, house sale price (\textit{price}). We expect the room covariate to act as a proxy for house size. Thus we decompose $R^2$ for the linear model,
\begin{equation} \label{eq:realestate_model}
    \mathbb{E}(\text{price}) =  \beta_0 + \bm{\beta} (\text{CBD},\, \text{images},\, \text{land},\, \text{school},\, \text{station},\, \text{room})^T, \;\; \beta_0 \in \mathbb{R},\,  \bm{\beta} \in \mathbb{R}^6.
\end{equation}

Fitted to each of the four subgroups, we obtain $R^2 = 0.37$ for model \eqref{eq:realestate_model}, for each subgroup except for the \textit{near} (2020) subgroup, for which $R^2 = 0.39$. The resulting Shapley values and 95\% confidence intervals are listed in~\Cref{table:realestate_results}, and shown graphically in \Cref{fig:realestate_shapley}. From those results we make the following observations regarding attributions of the total variability in house prices explained by model \eqref{eq:realestate_model}:
\begin{enumerate}[label={(\roman*)}]  
    \item \label{en:re1} In both 2019 and 2020, the attribution for distance to CBD was significantly higher for house sales \textit{near} to the CBD, compared to house sales farther from the CBD. Correspondingly, the attribution for roominess was significantly lower for house sales \textit{near} to the CBD.
    \item \label{en:re2} Amongst sales that were \textit{near} to the CBD, distances to the CBD received significantly greater attribution than both land size and roominess in 2019. Unlike the case for sales that were far from the CBD, these differences remained significant in 2020.
    \item \label{en:re3} Distances to stations and schools, as well as images used in advertising, have apparently had little overall impact. In all four subgroups, the attribution for distance to the nearest school is not significantly different from $0$. However, distance to a station does receive significantly more attribution amongst houses that are \textit{near} to the CBD, compared to those farther away.
    \item \label{en:re4} Interestingly, while not a significant result, the number of images used in advertising did appear to receive greater attribution amongst house sales that were \textit{far} from the CBD, compared to those \textit{near} to it.
    \item \label{en:re5} Amongst sales that were \textit{far} from the CBD, land size and roominess both received significantly more attribution than distance to the CBD, in 2019. However, this difference vanished in 2020, with distances to the CBD apparently gaining more attribution, while roominess and land size apparently lost some attribution, in a relative leveling out of these three covariates.
\end{enumerate}

\Cref{en:re1} is perhaps unsurprising: distances were less relevant far from the city, where price variability was influenced more by roominess and land size. Indeed, we can assume we are less likely to find small and expensive houses far from the CBD. However, the authors find \Cref{en:re5} interesting: near the city, the behaviour didn't change significantly during the 2020 period. However, far from the city, the behaviour did change significantly, moving toward the near-city behaviour. Distance to the city became more important for explaining variability in price, while land size and roominess both became less important, compared with their 2019 estimates. Our initial guess at an explanation was that near-city buyers, with near-city preferences, were temporarily motivated by urgency to buy farther out. However, according to \Cref{table:subgroups}, the observed ratio of near-city buyers to non-near buyers actually increased in this period, from $1.26$ in 2019 to $1.61$ in 2020. We will not take this expository analysis further here, but we hope that the interested reader is motivated to take it further, and to this end we have made the data and R, python and Julia scripts available at \url{github.com/BSMLcode/shapley\_confidence}.
\begin{figure}[tb]
    \centering
    \includegraphics[width=\textwidth]{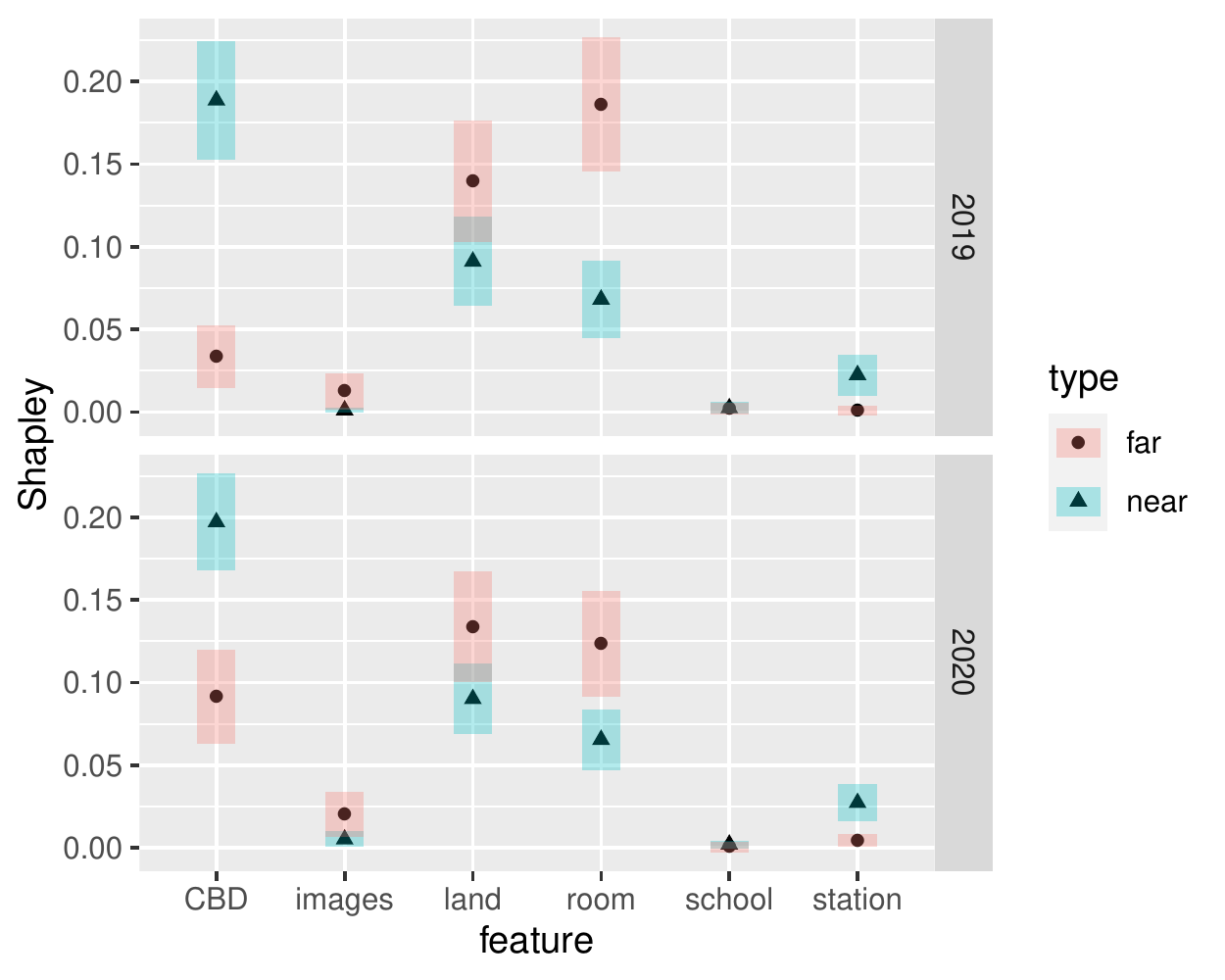}
    \caption{Shapley values and associated asymptotic 95\% confidence intervals for each of the 6 covariates, within the 4 real estate data subgroups (near/far and 2019/2020). Blue bands and triangle markers represent the \textit{near} subgroup (i.e., $\leq 25$km from Central Business District) and red bands with circular markers represent the \textit{far} subgroup ($>25$km).}
    \label{fig:realestate_shapley}
\end{figure}
\begin{table}[H]
    \centering
    \caption{Shapley values and asymptotic CIs for the covariates of the real estate data.}
    \begin{tabular}{lllll}
        \toprule
         Feature & \textit{near} (2019)        & \textit{far} (2019)         & \textit{near} (2020)        & \textit{far} (2020) \\
         \midrule
         CBD     & 0.19 (\phantom{-}0.15,0.23) & 0.03 (\phantom{-}0.02,0.05) & 0.20 (\phantom{-}0.17,0.23) & 0.10 (\phantom{-}0.06,0.12) \\
         images  & 0.00 (-0.00,0.00) & 0.01 ( 0.00,0.02) & 0.06 ( 0.00,0.01) & 0.02 ( 0.01,0.03) \\
         land    & 0.09 ( 0.06,0.12) & 0.14 ( 0.10,0.18) & 0.09 ( 0.07,0.11) & 0.13 ( 0.10,0.17) \\
         school  & 0.00 (-0.00,0.00) & 0.00 (-0.01,0.01) & 0.00 (-0.00,0.00) & 0.00 (-0.00,0.00) \\
         station & 0.02 ( 0.01,0.04) & 0.00 (-0.00,0.00) & 0.03 ( 0.02,0.04) & 0.00 ( 0.00,0.01) \\
         room    & 0.07 ( 0.05,0.09) & 0.19 ( 0.15,0.23) & 0.07 ( 0.05,0.08) & 0.12 ( 0.09,0.16) \\ 
         \bottomrule
    \end{tabular}
    \label{table:realestate_results}
\end{table}{}
\FloatBarrier
\section{Discussion}\label{sec:discussion}
In \Cref{sec:main}, we showed that under an elliptical (or pseudo-elliptical) joint distribution assumption, the game theoretic Shapley value decomposition of $R^2(\mathcal{Z}_n)$ is asymptotically normal. Implementing this result, we produced asymptotic Shapley value CIs and hypothesis tests.

In \Cref{s: monte}, we examined the coverage and width statistics of these asymptotic CIs over a range of sample sizes, using Monte Carlo simulations. These simulations were conducted across three separate data generating processes: using a variety of correlations with a compound symmetry covariance matrix under multivariate normal (i) and Student-$t$ (ii) distributions. The simulations were also conducted under a normal distribution data generating process, with random Wishart covariance matrix (iii). In all three cases, the coverage and width statistics were compared to the corresponding statistics for the non-parametric bootstrap CIs. The computation time and memory usage were also benchmarked against the bootstrap CIs. In Section \ref{ss: recommendations} we provided recommendations for when asymptotic CIs should be preferred and used, over the bootstrap CIs. We found that the asymptotic CIs have estimated coverage probabilities of at least $0.85$ across all studies, are preferable over the bootstrap CIs for small sample sizes ($n \leq 50$), and are often (although not always) favourable for large sample sizes. The asymptotic CIs are also far more computationally efficient than bootstrap CIs (at least for the cases of three and five  explanatory variables), and show improved coverage and width when correlation is further from $c = 0.5$.

Finally, in \Cref{sec:realestate}, we demonstrated the application of our derived asymptotic CIs to a data set consisting in house prices from Melbourne, to investigate a period of altered consumer behaviour during the initial stages of the arrival of COVID-19 in Australia. Using the CIs, we identified significant changes in model behaviour between 2019 and 2020, and attributed these changes amongst features, highlighting a potentially interesting direction of future research into the period. We also observed significant changes in behaviour between houses classified as close to the city, and those far from the city. 

We have made the computational implementations of our methods openly available for use. These computational resources are implemented in the R and Julia programming languages. 

We are preparing R, Julia, and Python versions of our methods for release. Our code and future progress regarding these implementations will be made available at \url{github.com/BSMLcode/shapley\_confidence}. Additionally, we aim to use what we have developed in order to derive the asymptotic distributions of variance inflation factors and their generalizations \cite{fox1992generalized}, as well as the closely related Owen values decomposition of the coefficient of determination \cite{Huettner:2012aa}.
\FloatBarrier
\appendix
\section{Additional figures}\label{appendix:figures}
\begin{figure}[H]
    \centering
    \includegraphics[width=\textwidth]{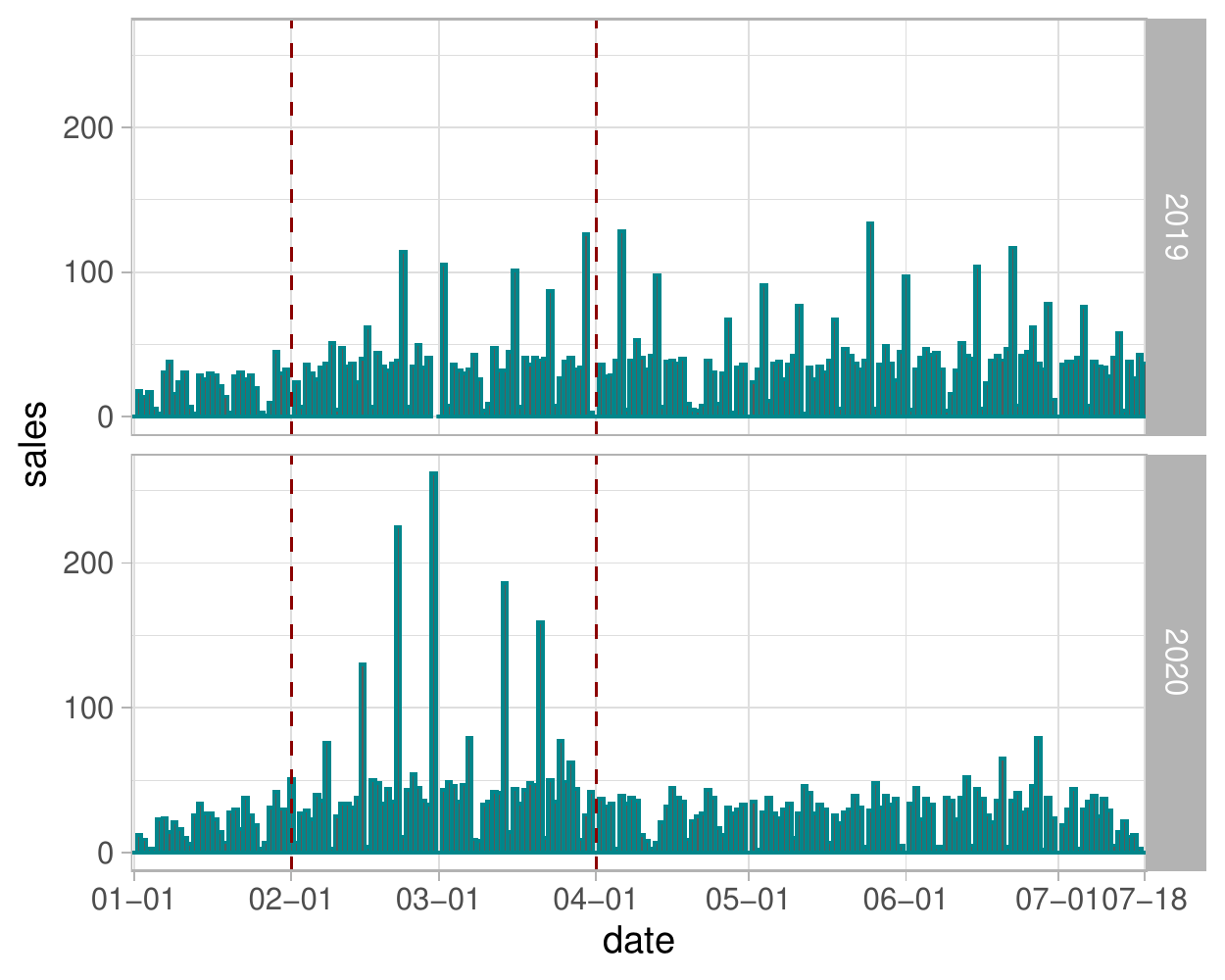}
    \caption{Bar plot of sales per day between 1 January and 18 July in 2019 and 2020. Vertical dashed red lines indicate 1 February and 1 April, between which a spike in sales is observed.}
    \label{fig:dates_barplot}
\end{figure}
\begin{figure}[H]
    \centering
    \includegraphics[width=\textwidth]{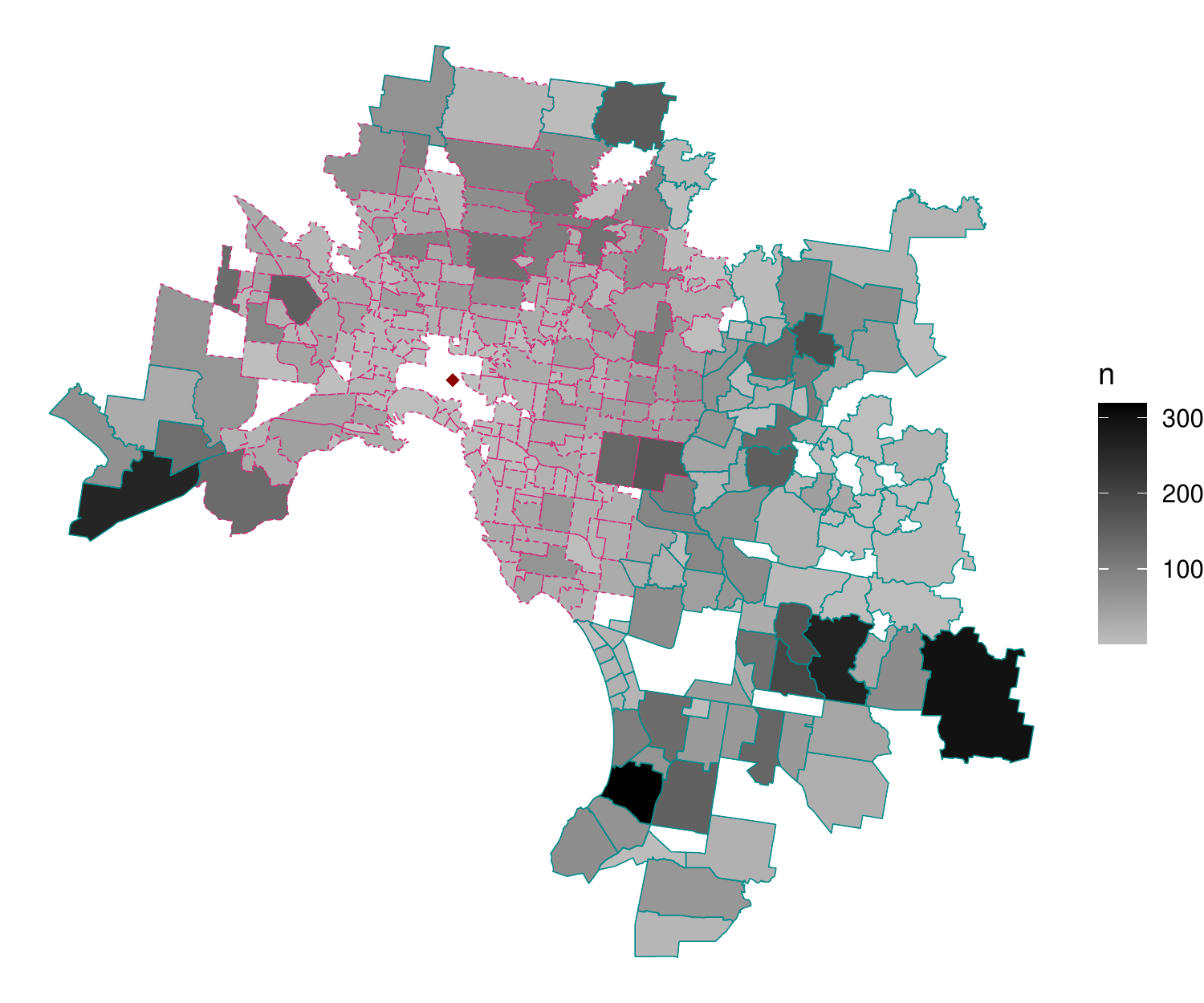}
    \caption{Map of Melbourne suburbs included in this study, with greyscale shading to represent sample sizes. Pink dashed polygon borders contain suburbs classified as \textit{near} to the Central Business District (i.e., $\leq 25$km), and blue solid polygon borders represent those classified as \textit{far} ($> 25$km). The solid red diamond represents the location of the CBD.}
    \label{fig:suburb_map}
\end{figure}
\begin{figure}[H]
    \begin{center}
        \includegraphics[width = \textwidth]{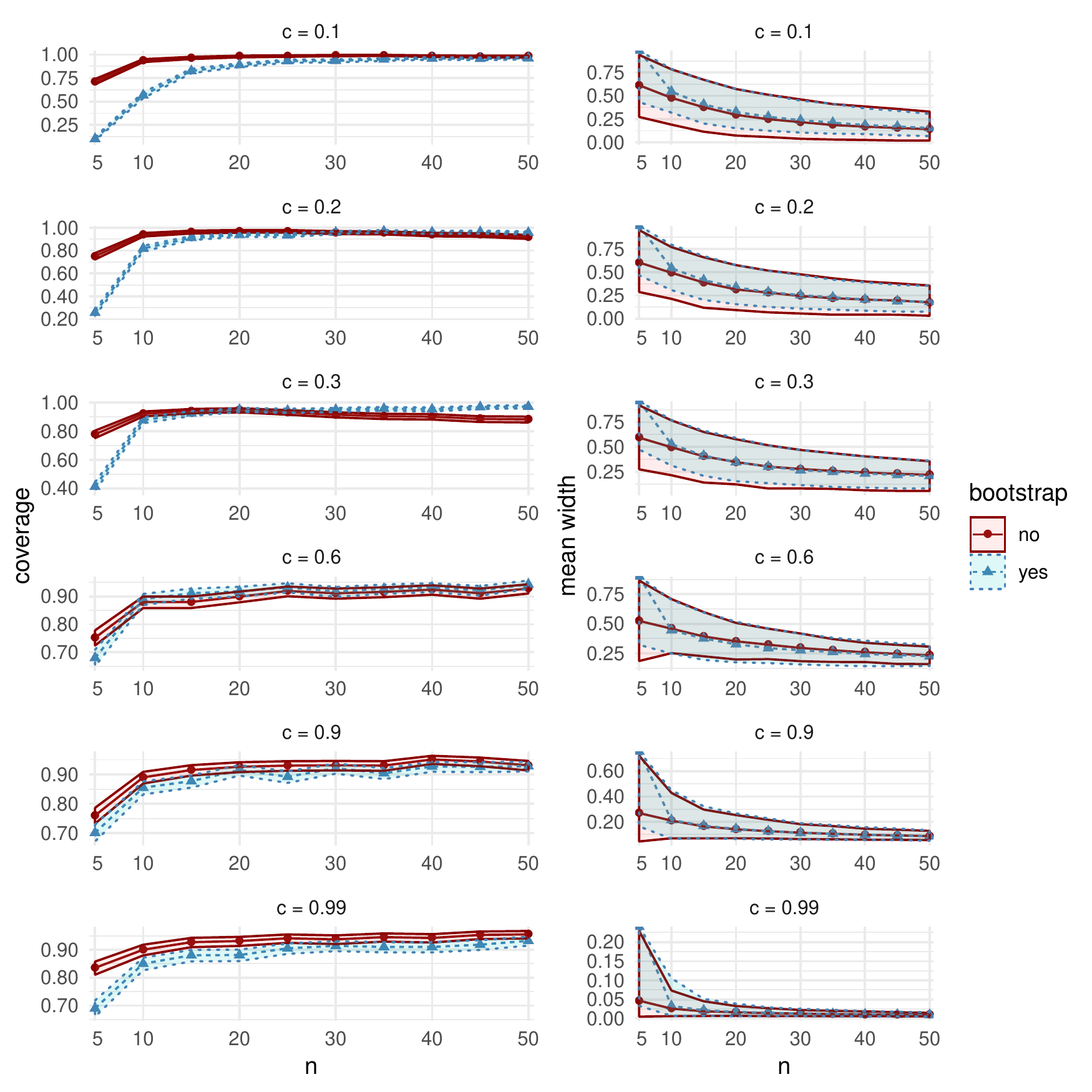}
    \end{center}
    \caption{Comparisons of coverage (left column) and mean width (right column), between the bootstrap CIs (dashed lines with triangular markers) and the asymptotic CIs (solid lines with circular markers) in MC Study A. Rows represent  correlations $c$, increasing from $0.1$ on the top row to $0.99$ on the bottom row. The horizontal axes display the sample sizes $n = 5,10,\ldots,50$.}
    \label{fig: A-small}
\end{figure}
\begin{figure}[H]
    \begin{center}
        \includegraphics[width = \textwidth]{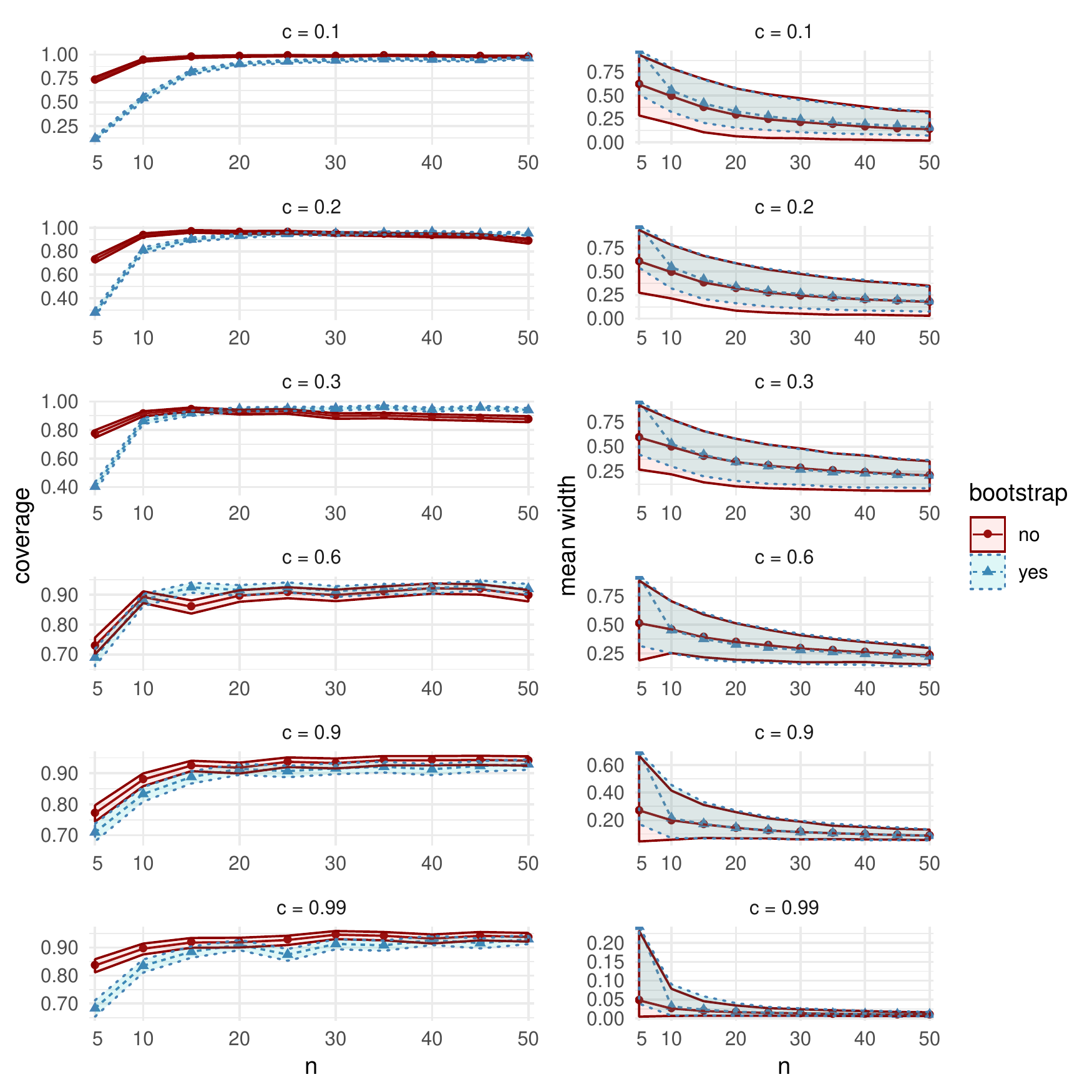}
    \end{center}
    \caption{Comparisons of coverage (left column) and mean width (right column), between the bootstrap CIs (dashed lines with triangular markers) and the asymptotic CIs (solid lines with circular markers) in MC Study B. Rows represent  correlations $c$, increasing from $0.1$ on the top row to $0.99$ on the bottom row. The horizontal axes display the sample sizes $n = 5,10,\ldots,50$.}
    \label{fig: B-small}
\end{figure} 
\begin{figure}[H]
    \begin{center}
        \includegraphics[width = \textwidth]{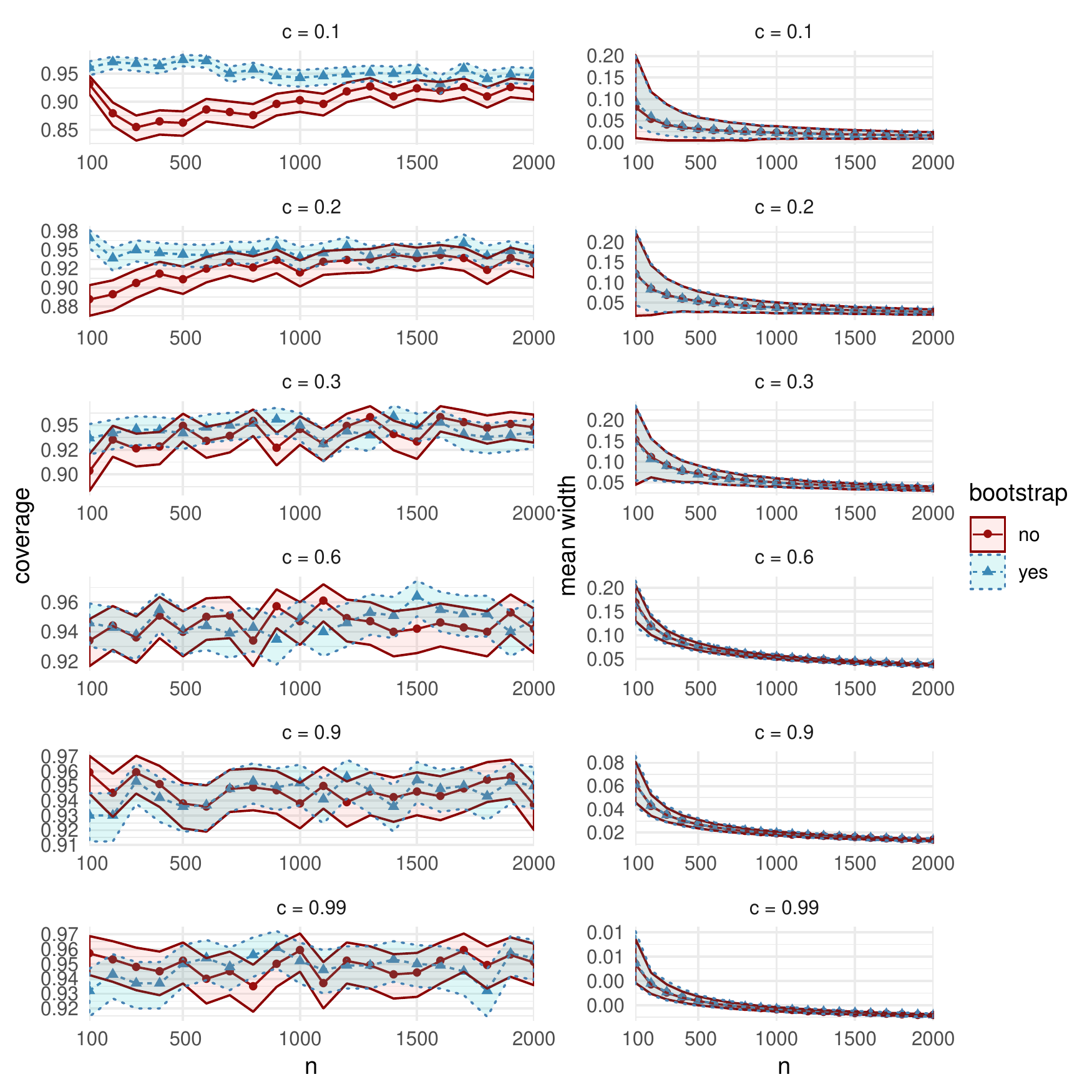}
    \end{center}
    \caption{Comparisons of coverage (left column) and mean width (right column), between the bootstrap CIs (dashed lines with triangular markers) and the asymptotic CIs (solid lines with circular markers) in MC Study B. Rows represent correlations $c$, increasing from $0.1$ on the top row to $0.99$ on the bottom row. The horizontal axes display the sample sizes $n = 100,200,\ldots,2000$.}
    \label{fig: B-large}
\end{figure} 
\begin{figure}[H]
    \begin{center}
        \includegraphics[width = \textwidth]{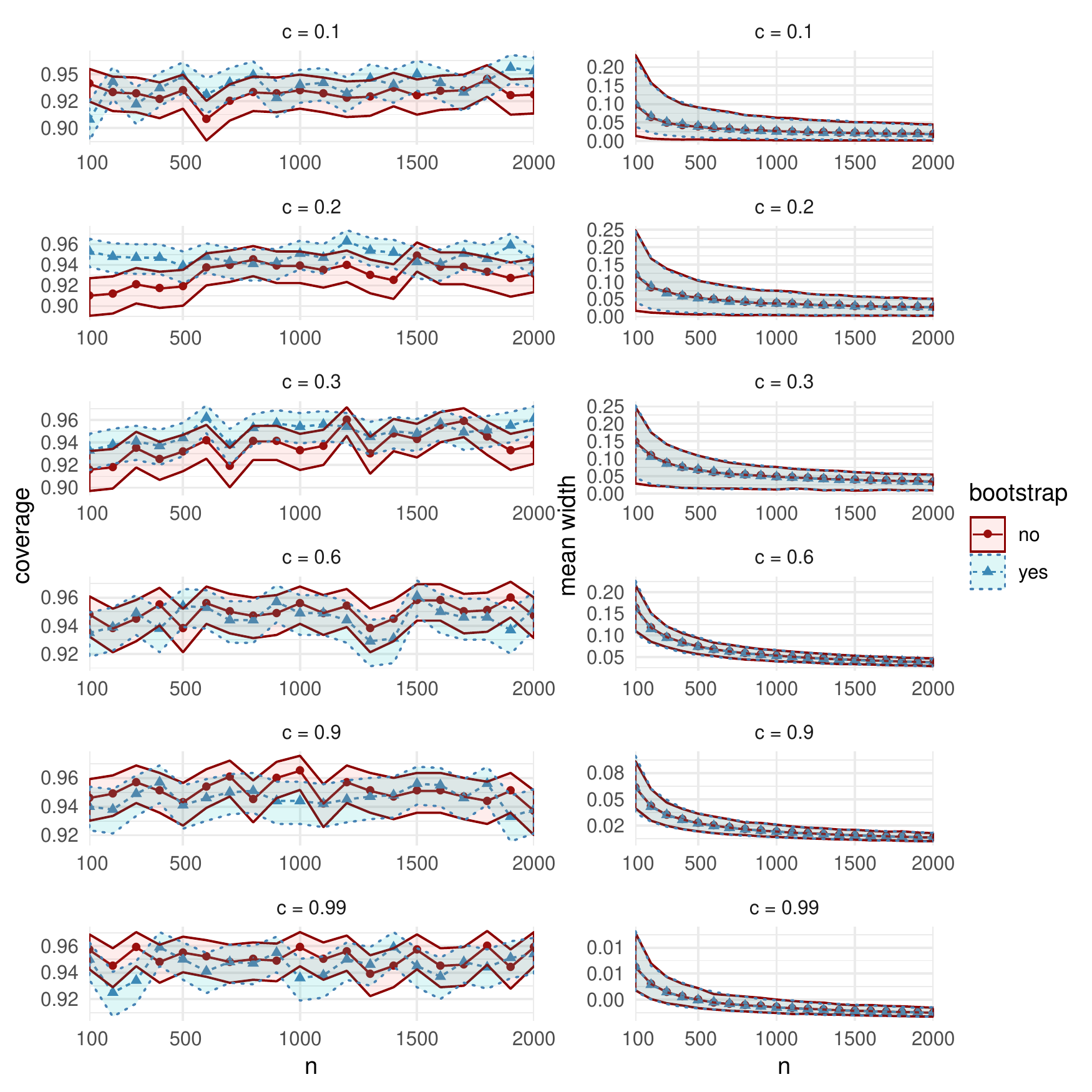}
    \end{center}
    \caption{Comparisons of coverage (left column) and mean width (right column), between the bootstrap CIs (dashed lines with triangular markers) and the asymptotic CIs (solid lines with circular markers) in MC Study C. Rows represent correlations $c$, increasing from $0.1$ on the top row to $0.99$ on the bottom row. The horizontal axes display the sample sizes $n = 100,200,\ldots,2000$.}
    \label{fig: C-large}
\end{figure} 
\bibliographystyle{apalike} 
\bibliography{20191030_bibliography}

\end{document}